\documentclass[aps,pra,superscriptaddress,showpacs,twocolumn,nofootinbib]{revtex4-2}

\usepackage{soul}
\setstcolor{red}

\usepackage{multirow}    

\usepackage{caption}
\DeclareCaptionJustification{justified}{\leftskip=0pt \rightskip=0pt \parfillskip=0pt plus 1fil}

\usepackage{graphicx}
\usepackage{dcolumn}
\usepackage{bm}
\usepackage{caption}
\usepackage{amsmath,amssymb,amsthm}
\usepackage{braket}
\usepackage{empheq}
\usepackage{subfig}
\usepackage{tikz}
\usepackage{hyperref}
\usepackage{xcolor}
\usepackage{float}
\DeclareMathOperator{\Tr}{Tr}
\usepackage{booktabs} 
\usepackage{makecell} 

\hypersetup{
    colorlinks=true,       
    linkcolor=blue,          
    citecolor=red,        
    filecolor=red,      
    urlcolor=blue,           
    runcolor=cyan
}

\newtheorem{theorem}{Theorem}
\newtheorem*{theorem*}{Theorem}

\newtheorem{lemma}{Lemma}
\newtheorem{proposition}{Proposition}

\usepackage{adjustbox}

\begin{document}

\title{Robust high-order quantum simulation using finite-width pulses}

\author{Leeseok Kim}
\affiliation{Department of Electrical \& Computer Engineering and Center for Quantum Information and Control, University of New Mexico, Albuquerque, NM 87131, USA}

\author{Milad Marvian}
\affiliation{Department of Electrical \& Computer Engineering and Center for Quantum Information and Control, University of New Mexico, Albuquerque, NM 87131, USA}

\begin{abstract}

We present a general framework for promoting first-order pulse sequences in quantum simulation to higher-order sequences that maintain robustness in the presence of finite pulse-width effects. Our approach maps a given first-order pulse sequence to a first-order Trotter formula, applies higher-order Trotter-formula constructions, and then compiles the resulting evolution back into physically implementable finite-width pulses via dynamically corrected gates.
The resulting sequences achieve arbitrarily high-order error scaling with respect to the control cycle time of the underlying first-order sequence while maintaining robustness to finite pulse-width effects. The framework also enables the use of multi-product formulas for more efficient constructions. We apply the framework to several physically motivated quantum-simulation tasks and numerically verify the predicted error scalings.

\end{abstract}

\maketitle

\section{Introduction}
Quantum simulation aims to realize the unitary evolution generated by a target Hamiltonian on quantum processors \cite{feynman1982simulating,georgescu2014quantum}. This task is an important application of quantum computing, as many such simulations are expected to be computationally intractable on classical computers \cite{lloyd1996universal}. Therefore, substantial effort has been devoted to developing efficient quantum simulation protocols in both the digital and analog paradigms. In the digital approach, the target dynamics are compiled into a sequence of one- and two-qubit gates from a universal gate set. In contrast, the analog approach uses a controllable physical system whose native interactions, together with designed controls, engineer an effective Hamiltonian close to the target, thereby realizing the desired evolution through continuous-time dynamics.

Here we focus on the analog setting, where designing suitable controls is naturally cast as a quantum-control problem. A convenient and widely used viewpoint is pulse-based control, in which fast control pulses are interleaved with evolution under the native interactions. Over each control cycle, the net evolution defines an effective Hamiltonian that approximates the target with high accuracy \cite{haeberlen1968coherent}. Such a framework appears across different communities under various names. One prominent example is Hamiltonian engineering, which has a rich history in NMR: it originated in designing pulse sequences to decouple or reshape interactions \cite{waugh1968approach,haeberlen1968coherent,mansfield1971symmetrized,rhim1973analysis,maricq1982application,raleigh1988rotational}, and was later developed to enable efficient universal quantum simulation using the native Hamiltonians available in NMR systems \cite{leung2000efficient,stollsteimer2001suppression,leung2002simulation,dodd2002universal,vandersypen2005NMR}. More recently, Hamiltonian engineering has extended beyond NMR to design pulse sequences across diverse platforms such as Rydberg-atom arrays \cite{bluvstein2021controlling,geier2021floquet,scholl2022microwave,zhao2023floquet-tailored}, superconducting circuits \cite{nguyen2024programmable}, trapped-ion chains \cite{morong2023engineering} and solid-state spin systems \cite{choi2020robust,zhou2023robust_sensing,zhou2024robust}. In parallel, efficient theoretical methods for systematic Hamiltonian engineering have been proposed \cite{hayes2014programmable,bookatz2014hamiltonian,bookatz2016improved,haas2019engineering,babler2023synthesis,Votto2024universal,babler2024time,babler2025general,chen2026engineering}.
Another example is digital-analog quantum simulation \cite{arrazola2016digitalanalog,lamata2017digitalanalog,lamata2018digitalanalog,parrarodriguez2020digitalanalog,gonzalezraya2021digitalanalog,yu2022superconducting,andersen2025thermalization}, which alternates analog evolution under native multi-qubit interactions with digital layers, typically fast single-qubit rotations, to implement the target dynamics. This approach has been experimentally demonstrated in a variety of recent quantum simulation protocols, including implementations on trapped-ion \cite{arrazola2016digitalanalog,kumar2025digitalanalogb,katz2025hybrid} and superconducting platforms \cite{lamata2018digitalanalog,tao2021experimental,kumar2025digitalanaloga,andersen2025thermalization}.

This fast pulse-based simulation framework is most often treated only to leading order in the control-cycle time, since constructing pulse sequences with higher-order accuracy is typically challenging. However, higher-order designs are crucial for more accurate quantum simulation. Moreover, many analyses assume ideal instantaneous (bang–bang) control pulses. In practice, control pulses have finite duration, and the native Hamiltonian remains active during their application, leading to systematic distortions of the intended control operation. These errors accumulate over long sequences and can significantly degrade simulation performance. Developing pulse-sequence frameworks that provide higher-order accuracy while remaining robust to finite-width pulse effects is therefore an important yet still underexplored problem.

In this work, we present a systematic method that promotes any first-order instantaneous-pulse sequence to pulse sequences achieving arbitrarily high-order simulation while remaining robust to finite pulse-width errors. The construction proceeds in four steps: we reinterpret a given sequence as a first-order Trotter formula for suitable effective Hamiltonians, upgrade it via higher-order Trotter formulas, map the resulting exponentials back to native Hamiltonians and control pulses, and implement each pulse using dynamically corrected gates (DCGs) \cite{khodjasteh2009dynamical,khodjasteh2009dynamically,khodjasteh2010arbitrarily} by treating the always-on native Hamiltonian as noise. In the ideal-pulse limit, the resulting sequence achieves arbitrarily high-order accuracy in the control-cycle time, as guaranteed by the underlying high-order Trotter construction. When implemented with finite-width pulses, the constructed sequence is typically longer than the original, since in our framework the pulse count is proportional to the products of exponentials used in high-order Trotter formulas, which grows exponentially with the desired error order in the control-cycle time. While this growth could, in principle, amplify control imperfections, DCGs strongly suppress this error to keep the overall simulation error manageable. As illustrative examples, we convert several known first-order sequences into higher-order ones (e.g., fourth order), achieving improved simulation accuracy with realistic pulse widths.

Our framework applies not only to first-order sequences built from instantaneous pulses, but also to first-order sequences that are already robust to finite pulse widths \cite{bookatz2014hamiltonian,choi2020robust,peng2022deep,tyler2023higher,zhou2024robust,Votto2024universal,babler2025general}. These schemes explicitly model finite-width pulses and design the sequence so that it both reproduces the target dynamics to first-order and averages away pulse-width error contributions. Our framework still produces higher-order sequences in this regime, and the baseline sequence’s built-in robustness allows us to realize the required operations simply by composing the same pulses already used in the original design, without introducing any new DCG constructions.

We note that applying our construction at fourth order and beyond requires negative-time evolution under the native Hamiltonian, due to the negative coefficients that arise in higher-order Trotter formulas. Although advanced dynamical-decoupling techniques can realize such time reversal, doing so typically incurs a pulse overhead that grows exponentially with system size. Motivated by this challenge, we show that in applications where the goal is to estimate expectation values of observables, rather than to prepare the full quantum state, substantially more efficient techniques can be developed. In this setting, we construct pulse sequences based on multiproduct formulas \cite{childs2012hamiltonian,carrera2023well} that avoid negative-time evolution altogether, as they are built exclusively from second-order Trotter formulas.

\section{Summary of main results}\label{sec2}

We consider an $n$-qubit system governed by
\begin{align}\label{eq:model}
H(t)=H_0+H_c(t),
\end{align}
where $H_0$ is the time-independent native Hamiltonian and $H_c(t)$ is a control Hamiltonian generated by classical control functions. Our goal is to implement the target evolution $e^{-iH_{\mathrm{targ}}T}$ using a single control cycle of duration $T_c$,
\begin{align}\label{eq:task}
U(T_c):=\mathcal{T}\exp\left(-i\int_{0}^{T_c} H(t) dt\right)\approx e^{-iH_{\mathrm{targ}}T},
\end{align}
where $\mathcal{T}$ denotes time ordering. The cycle time $T_c$ may differ from the simulation time $T$. It suffices to design a short cycle since repeating it $r$ times simulates a longer evolution and the total error scales at most linearly in $r$.

To specify a pulse sequence, we introduce the control propagator
\begin{align}\label{eq:Uc_def}
U_c(t)=\mathcal{T}\exp\left(-i\int_{0}^{t} H_c(t')dt'\right).
\end{align}
In the interaction frame defined by $U_c(t)$, the evolution over one cycle can be written as
\begin{align}\label{eq:interaction_frame}
U(T_c)=U_c(T_c) \mathcal{T}\exp\left(-i\int_{0}^{T_c} U_c^\dagger(t)H_0U_c(t) dt\right).
\end{align}
We restrict to cyclic controls satisfying $U_c(T_c)=I$. If the average condition
\begin{align}\label{eq:avg_condition}
\int_{0}^{T_c} U_c^\dagger(t)H_0U_c(t)dt = H_{\mathrm{targ}}T
\end{align}
holds, then under a convergence condition for Magnus expansion such as $\Lambda T_c<\pi$ \cite{moan2008convergence},
\begin{align}\label{eq:first_order_base}
U(T_c)=e^{-iH_{\mathrm{targ}}T}+\mathcal{O}\big((\Lambda T_c)^2\big),
\end{align}
where $\Lambda:=\|H_0\|$ and $\| \cdot \|$ denotes the operator norm. We call a pulse sequence \emph{first-order} if it satisfies Eq.~\eqref{eq:first_order_base}.

Our first result upgrades any first-order instantaneous pulse sequence to arbitrarily high-order in $\Lambda T_c$ in the ideal pulse limit while remaining robust to finite pulse duration $t_p$. Let $l$ be the number of pulses in the given first-order sequence. For any integers $p\ge 1$ and $q\ge 1$, we construct a sequence $\widetilde{\mathcal S}_{2p}(T)$ such that
\begin{align}\label{eq:result1}
\begin{aligned}
\bigl\|\widetilde{\mathcal S}_{2p}(T)-e^{-iH_{\mathrm{targ}}T}\bigr\| & = \mathcal O\bigl((\Lambda T_c)^{2p+1}\bigr) \\ 
&\quad + \mathcal O\bigl(5^{p-1} l \varepsilon_{\mathrm{DCG}}^{(q)}\bigr),
\end{aligned}
\end{align}
where $\varepsilon_{\mathrm{DCG}}^{(q)}=\mathcal O\bigl((\kappa^{q}\Lambda t_p)^{q+1}\bigr)$ denotes the error incurred when implementing one ideal pulse by a $q$th-order concatenated dynamically corrected gate construction \cite{khodjasteh2009dynamical,khodjasteh2010arbitrarily}. The constant $\kappa>1$ is determined by the available control resources and by how the error induced by $H_0$ is modulated by the applied controls during a finite-width pulse. In particular, the second term in Eq.~\eqref{eq:result1} is $(q+1)$th-order in $t_p$, demonstrating robustness to finite pulse widths. Its prefactor grows with the $\mathcal O(5^{p-1}l)$ pulse insertions in the higher-order sequence, so increasing $p$ suppresses the Trotter error at the cost of a longer sequence and larger accumulated DCG error.


Our second result takes as input any first-order pulse sequence specifically designed for implementation using finite-width pulses. Let $l$ be the number of pulses in the given sequence and let each pulse have duration $t_p$. For any integer $p \ge 1$, we construct a sequence $\widetilde{\mathcal S}_{2p}(T)$ using only the same set of control pulses such that
\begin{align}\label{eq:result2}
\begin{aligned}
\bigl\|\widetilde{\mathcal S}_{2p}(T)-e^{-iH_{\mathrm{targ}}T}\bigr\| & = \mathcal O \bigl((\Lambda (T_c+(c_p-1) l t_p))^{2p+1}\bigr) \\
& \quad + \mathcal O\bigl(5^{p-1} p^2 l(l\Lambda t_p)^2\bigr),
\end{aligned}
\end{align}
where $c_p=2\prod_{j=2}^{p}(4-4^{1/(2j-1)})$. When $c_p l t_p \ll T_c$, the leading high-order scaling is essentially unchanged. The second term is second order in $t_p$, again reflecting robustness to finite pulse widths. Unlike the first construction, this procedure uses only pulses already present in the original first-order design.

Both constructions assume access to uniformly stretched and time-reversed implementations of each available pulse shape, as formalized later. In Construction~1, a $q$th-order DCG may require stretching by a factor of up to $2^{\sum_{r=1}^{q}1/r}$. In Construction~2, no new pulse shapes are introduced: all required operations are built from stretched and time-reversed versions of the original pulses, and the worst-case stretch factor in the order-$2p$ sequence is $4^{\sum_{r=1}^{p-1}1/(2r+1)}$.

For $p\ge 2$, both constructions require negative-time evolution under the native Hamiltonian, i.e., segments of the form $e^{+iH_0\tau}$, because higher-order Trotter formulas involve negative coefficients in their recursive compositions. Such operations can be implemented robustly using concatenated dynamical decoupling, but at the cost of exponentially many pulses in $n$. When only expectation values are needed, however, negative-time evolution can be avoided by adapting multi-product formulas \cite{childs2012hamiltonian,carrera2023well} to the pulse-sequence setting. The idea is to classically combine measurement outcomes from several robust second-order pulse sequences, each using only positive-time evolution, to recover error scalings similar to those in Eqs.~\eqref{eq:result1} and~\eqref{eq:result2}.

To benchmark our method, we apply it to several physically motivated quantum-simulation tasks previously addressed with first-order pulse sequences \cite{bookatz2014hamiltonian,parrarodriguez2020digitalanalog,gonzalezraya2021digitalanalog}, systematically upgrade those sequences to higher order, and numerically verify the predicted error scalings.

\section{Background and notation}\label{sec:background-and-notation}
This section reviews quantum simulation by fast pulse sequences based on the Magnus expansion, for both ideal and finite-width pulses, and introduces notation used throughout the paper. We also define time-stretching and pulse reversal, and review the Trotter formulas used later in our constructions. Throughout, we use the convention $\prod_{i=a}^b A_i := A_b A_{b-1} \cdots A_a$.

\subsection{Quantum simulation using fast pulse sequences}\label{subsec:qsim-using-fast-pulse-sequence}

\subsubsection{Ideal pulse sequences}
Given the native and control Hamiltonians in Eq.~\eqref{eq:model}, we consider quantum simulation by a sequence of $l$ instantaneous control pulses separated by free evolution under $H_0$. An ideal pulse sequence is generated by choosing
\begin{align}
    H_c(t) = \sum_{k=1}^l \theta_k \delta(t-t_k)H_{P_k}, \quad t_k = \sum_{j=1}^k \tau_j,
\end{align} 
where $\delta(t)$ is the Dirac delta function and $T_c=t_l=\sum_{k=1}^l \tau_k$ is the total control time. The pulse applied at $t=t_k$ generates the unitary $P_k=e^{-i\theta_k H_{P_k}}$. The resulting propagator is (cf.~Fig.~\ref{fig:Fig1}(a))
\begin{align}\label{Ideal_DD}
\begin{split}
    U(T_c)
    &= \mathcal{T}\exp\left(-i\int_0^{T_c} H(t)dt\right) \\
    &= P_l e^{-iH_0\tau_l} P_{l-1}\cdots P_2 e^{-iH_0\tau_2} P_1 e^{-iH_0\tau_1}.
\end{split}
\end{align}
Given a target Hamiltonian $H_{\mathrm{targ}}$ and simulation time $T$, the goal is to choose $\{P_k\}$ and $\{\tau_k\}$ such that
\begin{align}
    U(T_c)\approx e^{-iH_{\mathrm{targ}}T}.
\end{align}
To analyze the sequence, it is convenient to introduce the control propagator
\begin{align}
    U_c(t)=\mathcal{T}\exp\left(-i\int_0^t H_c(t')dt'\right),
\end{align}
and define
\begin{align}
    g_k:=P_kP_{k-1}\cdots P_1,\qquad g_0:=I.
\end{align}
For ideal pulses, $U_c(t)$ is piecewise constant, with $U_c(t)=g_k$ for $t\in(t_k,t_{k+1})$. Accordingly, the toggling-frame Hamiltonian
\begin{align}
    H_I(t):=U_c^\dagger(t)H_0U_c(t)
\end{align}
is also piecewise constant, namely $H_I(t)=g_k^\dagger H_0 g_k$ on $t\in(t_k,t_{k+1})$. If the control cycle is closed, i.e., $U_c(T_c)=g_l=I$, then
\begin{align}\label{Ideal_DD_toggling}
    U(T_c)
    &=\mathcal{T}\exp\left(-i\int_0^{T_c} H_I(t) dt\right)\\
    &=\prod_{k=0}^{l-1} e^{-i g_k^\dagger H_0 g_k \tau_{k+1}}.
\end{align}
The effective evolution can be characterized using the Magnus expansion \cite{magnus1954on},
\begin{align}
    U(T_c)=e^{-i\Omega},
    \qquad
    \Omega=\sum_{m=1}^\infty \Omega^{(m)},
\end{align}
which converges for $\Lambda T_c<\pi$ \cite{moan2008convergence}. The first-order term is
\begin{align}
    \Omega^{(1)}=\sum_{k=0}^{l-1} g_k^\dagger H_0 g_k \tau_{k+1},
    \label{ideal_Omega1}
\end{align}
while, more generally, $\|\Omega^{(m)}\|=\mathcal{O}\big((\Lambda T_c)^m\big)$. The simulation error can be upper bounded as
\begin{align}
    \|U(T_c)-e^{-iH_{\mathrm{targ}}T}\| \le \|\Omega-H_{\mathrm{targ}}T\|.
\end{align}
Therefore, if
\begin{align}
    \Omega^{(1)} &= H_{\mathrm{targ}}T,\\
    \Omega^{(2)} &= \Omega^{(3)}=\cdots=\Omega^{(m)}=0,
\end{align}
then the simulation error scales as $\mathcal{O}\big((\Lambda T_c)^{m+1}\big)$. Furthermore, if the toggling-frame Hamiltonian is time-symmetric,
\begin{align}
H_I(T_c-t)=H_I(t), \qquad t\in[0,T_c],
\end{align}
then all even-order Magnus terms vanish \cite{iserles2001time}. Hence, a time-symmetric sequence with $\Omega^{(1)}=H_{\mathrm{targ}}T$ achieves $\mathcal{O}\big((\Lambda T_c)^3\big)$ simulation error.

\begin{figure}[t!]
    \centering
    \includegraphics[width=8.6cm]{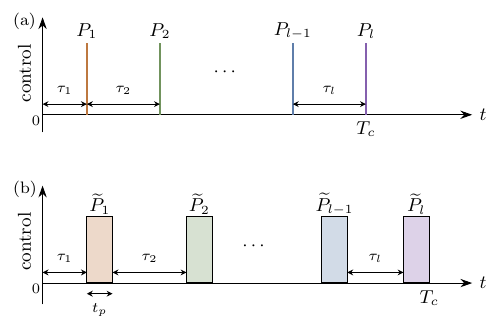}
    \caption{ Pulse sequence with $l$ control pulses. (a) Ideal pulses, where the $k$th pulse is instantaneous and implements $P_k$. (b) Finite-width pulses of duration $t_p$, where the $k$th pulse implements $\widetilde{P}_k$. $T_c$ denotes the control cycle time.}
    \label{fig:Fig1}
\end{figure}

\subsubsection{Finite-width pulse sequences}
In practice, control pulses have finite amplitude and bandwidth, and hence nonzero duration $t_p>0$. We model the $k$th pulse by
\begin{align}\label{H_c(t)}
    H_c(t)=\sum_{k=1}^l f_k(t-s_k)H_{P_k},
    \quad
    s_k:=\sum_{j=1}^k \tau_j + (k-1)t_p,
\end{align}
where each control function $f_k(t)$ is nonzero only on $[0,t_p]$ and satisfies $\int_0^{t_p} f_k(t)dt=\theta_k$. The total control time is now $T_c=\sum_{k=1}^l \tau_k + lt_p$. The unitary generated during the $k$th pulse is
\begin{align}\label{imperfect_pulse_1}
    \widetilde{P}_k =\mathcal{T}\exp\left(-i\int_0^{t_p} \big(H_0+f_k(t)H_{P_k}\big)dt\right),
\end{align}
so that the full propagator becomes
\begin{align}\label{Noisy_DD}
\begin{split}
    U(T_c) &= \widetilde{P}_l e^{-iH_0\tau_l} \widetilde{P}_{l-1}\cdots \widetilde{P}_2 e^{-iH_0\tau_2}\widetilde{P}_1 e^{-iH_0\tau_1}.
\end{split}
\end{align}

It is convenient to separate each finite-width pulse into the ideal pulse $P_k$ and an error unitary generated by $H_0$ during the pulse. Defining
\begin{align}
    U_{P_k}(t)=\mathcal{T}\exp\left(-i\int_0^t f_k(t')H_{P_k}dt'\right),
\end{align}
so that $U_{P_k}(t_p)=P_k$, we write
\begin{align}\label{P_tilde_def}
\begin{split}
    \widetilde{P}_k &= P_k\mathcal{T}\exp\left(-i\int_0^{t_p} U_{P_k}^\dagger(t)H_0U_{P_k}(t)dt\right) \\
    &:= P_k e^{-i\Phi_k},
\end{split}
\end{align}
where $\Phi_k=\sum_{m=1}^\infty \Phi_k^{(m)}$ is the Magnus expansion that converges for $\Lambda t_p<\pi$, and
\begin{align}\label{magnus_1}
    \Phi_k^{(1)}=\int_0^{t_p} U_{P_k}^\dagger(t)H_0U_{P_k}(t)dt.
\end{align}
Thus each finite-width pulse differs from its ideal counterpart by $\mathcal{O}(\Lambda t_p)$, so implementing an ideal pulse sequence with finite-width controls generally introduces an additional $\mathcal{O}(l\Lambda t_p)$ error.

As in the ideal case, using the Magnus expansion we write
\begin{align}
    U(T_c)=e^{-i\Omega},
    \qquad
    \Omega=\sum_{m=1}^\infty \Omega^{(m)},
\end{align}
with $\|\Omega^{(m)}\|=\mathcal{O}\big((\Lambda T_c)^m\big)$. The first-order term is
\begin{align}
\Omega^{(1)}
&= \int_0^{T_c} U_c^\dagger(t)H_0U_c(t)dt \\
&= \sum_{k=0}^{l-1} \left(g_k^\dagger H_0 g_k\tau_{k+1} + g_k^\dagger \Phi_{k+1}^{(1)} g_k\right),
\label{finite_width_Omega1}
\end{align}
where the first term is Eq.~\eqref{ideal_Omega1}, and the second term is the leading finite-width correction. Accordingly, 
\begin{align}
    \|U(T_c)-e^{-iH_{\mathrm{targ}}T}\| \le \|\Omega-H_{\mathrm{targ}}T\|.
\end{align}
A convenient sufficient condition for $\Omega^{(1)}=H_{\mathrm{targ}}T$ is to impose
\begin{align}
    \sum_{k=0}^{l-1} g_k^\dagger H_0 g_k\tau_{k+1} &= H_{\mathrm{targ}}T,
    \label{free_evolution_correction}\\
    \sum_{k=0}^{l-1} g_k^\dagger \Phi_{k+1}^{(1)} g_k &= 0,
    \label{pulse_width_correction}
\end{align}
Without Eq.~\eqref{pulse_width_correction}, a naive finite-width implementation incurs a leading $\mathcal{O}(l\Lambda t_p)$ error. Imposing Eq.~\eqref{pulse_width_correction} cancels this term and restores the $\mathcal{O}\big((\Lambda T_c)^2\big)$ scaling of the ideal first-order design. Such sequences have been developed in, e.g., \cite{bookatz2014hamiltonian,choi2020robust,peng2022deep,Votto2024universal,babler2025general}. Moreover, as in the ideal case, a time-symmetric sequence satisfying $\Omega^{(1)}=H_{\mathrm{targ}}T$ achieves $\mathcal{O}\big((\Lambda T_c)^3\big)$ simulation error.

\subsection{Control-function variations}\label{subsec:control-function-variations}

In addition to a base control function $f_k(t)$ on $[0,t_p]$, our constructions use two elementary operations: time-reversal with sign flip, and time-stretching. For a control function $f_k(t)$, let $\widetilde{P}_k$ denote the corresponding finite-width pulse defined in Eq.~\eqref{P_tilde_def}. We then introduce the reversed pulse
\begin{align}
    \widetilde{P}_k^{\mathrm{rev}} = \mathcal{T}\exp\left(-i\int_0^{t_p}\big(H_0-f_k(t_p-t)H_{P_k}\big) dt \right).
\end{align}
In the ideal limit $H_0=0$, the control functions $f_k(t)$ and $-f_k(t_p-t)$ implement $P_k$ and $P_k^\dagger$, respectively.

We also consider stretched versions of these functions. For $c\ge 1$, we define
\begin{align}
    \widetilde{P}_k(ct_p) &= \mathcal{T}\exp\left(-i\int_0^{ct_p}\Big(H_0+\frac{1}{c}f_k\left(\frac{t}{c}\right)H_{P_k}\Big)dt \right),\label{eq:stretched_pulse}\\
    \widetilde{P}_k^{\mathrm{rev}}(ct_p) &= \mathcal{T}\exp\left(-i\int_0^{ct_p}\Big(H_0-\frac{1}{c}f_k\left(t_p-\frac{t}{c}\right)H_{P_k}\Big)dt \right). \label{eq:stretched_rev_pulse}
\end{align}
The factor $1/c$ preserves the pulse area, so stretching changes the pulse duration but not the ideal pulse action. Fig.~\ref{fig:pulse_variation} illustrates these four variants for a smooth asymmetric control function. 

\begin{figure}[t!]
    \centering
    \includegraphics[width=8.6cm]{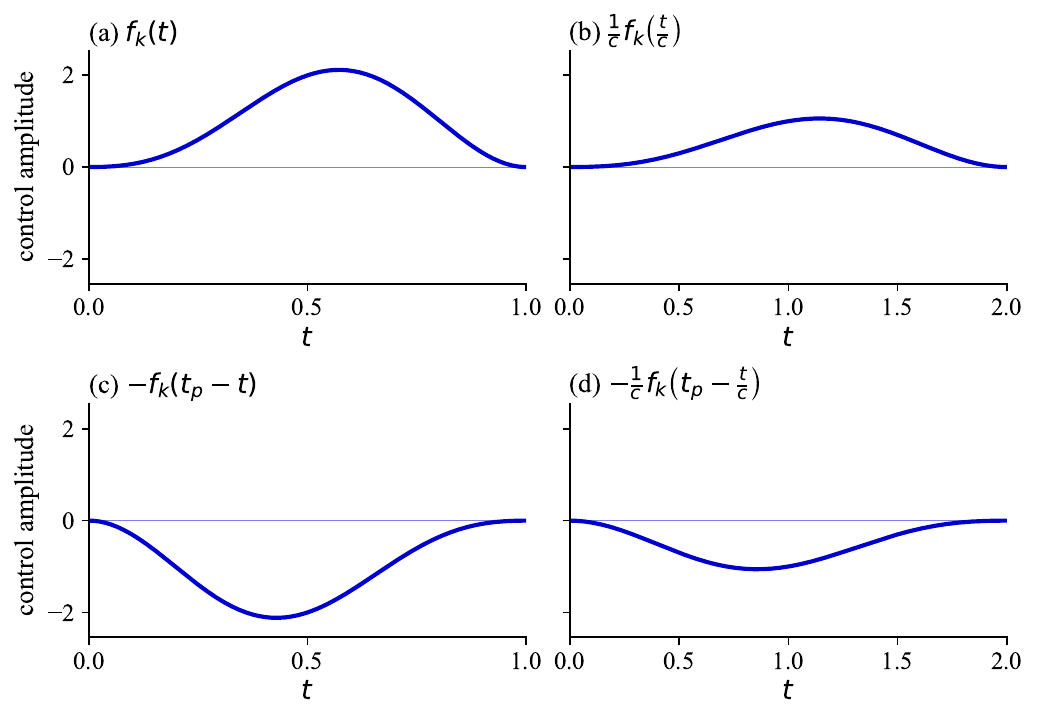}
    \caption{Control-function variations for a smooth asymmetric pulse shape: (a) $f_k(t)$, (b) $f_k(t/c)/c$ with $c=2$, (c) $-f_k(t_p-t)$, and (d) $-f_k(t_p-t/c)/c$.}
    \label{fig:pulse_variation}
\end{figure}

\subsection{Trotter formulas}

A key ingredient in our constructions is the Lie--Trotter--Suzuki product formulas \cite{trotter1959on,suzuki1991general}, which we refer to simply as Trotter formulas.
Let
\begin{align}
    A=\sum_{k=1}^l A_k,
\end{align}
where each $A_k$ is Hermitian. Trotter formulas approximate the target evolution $e^{-iAt}$ by products of the simpler evolutions generated by the individual $A_k$. The first- and second-order formulas are
\begin{align}
    \mathcal{S}_1(t) &= \prod_{k=1}^l e^{-iA_k t},
    \label{Trotter_1st}\\
    \mathcal{S}_2(t) &= \prod_{k=1}^l e^{-iA_k t/2}\prod_{k=l}^1 e^{-iA_k t/2},
    \label{Trotter_2nd}
\end{align}
and the higher even-order formulas are defined recursively by \cite{suzuki1991general}
\begin{align}
    \mathcal{S}_{2p}(t) = \big[\mathcal{S}_{2p-2}(u_p t)\big]^2 \mathcal{S}_{2p-2}\big((1-4u_p)t\big) \big[\mathcal{S}_{2p-2}(u_p t)\big]^2,
\label{Trotter_2kth}
\end{align}
where $u_p=(4-4^{1/(2p-1)})^{-1}$ for $p\ge 2$.

For small $t$, these formulas satisfy \cite{childs2021theory}
\begin{align}
    \|\mathcal{S}_{2p}(t)-e^{-iAt}\| = \mathcal{O}\big(\alpha_{\mathrm{comm}} t^{2p+1}\big),
    \label{Trotter_Error}
\end{align}
where $\alpha_{\mathrm{comm}}$ depends on nested commutators of the terms $\{A_k\}$ up to order $2p+1$,
\begin{align}
    \alpha_{\mathrm{comm}} := \sum_{k_1,\dots,k_{2p+1}=1}^l \left\|[A_{k_{2p+1}},[\cdots,[A_{k_2},A_{k_1}]\cdots]]\right\|.
\end{align}

\section{Methods}\label{sec:methods}

In this section, we present a framework for constructing pulse sequences for robust high-order simulation, and then instantiate it in two constructions. We also discuss the implementation of the negative-time evolutions required by higher-order formulas and the use of multi-product formulas to circumvent them when the goal is to estimate expectation values of observables.

\subsection{General framework}

We begin with a first-order pulse sequence that simulates a target Hamiltonian $H_{\mathrm{targ}}$ for time $T$ using cycle time $T_c$. The sequence may employ either ideal or finite-width pulses. Our goal is to lift such a construction to higher order while retaining robustness to finite pulse duration.

The key observation is that any first-order sequence satisfying $\Omega^{(1)}=H_{\mathrm{targ}}T$ in the ideal-pulse setting as given in Eq.~\eqref{ideal_Omega1}, or Eqs.~\eqref{free_evolution_correction} and \eqref{pulse_width_correction} in the finite-width setting, defines a first-order Trotter formula for suitable operators $\{A_k\}$ satisfying $\sum_k A_k=H_{\mathrm{targ}}$. These operators are determined by the pulse operations and the native Hamiltonian $H_0$. Replacing the first-order product by a higher-order Trotter formula then yields an approximation to $e^{-iH_{\mathrm{targ}}T}$ with improved scaling in $T_c$. Finally, compiling the resulting product back into pulse operations gives a finite-width implementation compatible with the minimum pulse duration $t_p$ and robust to pulse-width errors.

\subsection{Construction 1: Ideal pulse-sequence input}\label{subsec:construction1}

Consider a first-order ideal pulse sequence as in Eq.~\eqref{Ideal_DD},
\begin{align}\label{algo1_starting_point}
U(T_c) = \prod_{k=1}^{l} P_k e^{-iH_0\tau_k},
\end{align}
satisfying $\| U(T_c) - e^{-iH_{\mathrm{targ}}T} \| = \mathcal{O}((\Lambda T_c)^2)$. We upgrade this sequence to higher-order in four steps.

\subsubsection{First-order Trotter mapping}
Using the toggling-frame representation in Eq.~\eqref{Ideal_DD_toggling},
\begin{align}\label{algo1_U_T_c}
U(T_c) = \prod_{k=0}^{l-1} e^{-ig_k^\dagger H_0 g_k\tau_{k+1}},
\end{align}
and the first-order condition $\Omega^{(1)} = H_{\mathrm{targ}}T$ from Eq.~\eqref{ideal_Omega1}, we identify $U(T_c)$ as the first-order Trotter formula $\mathcal{S}_1(T)$ defined in Eq.~\eqref{Trotter_1st} with
\begin{align}\label{exponent_algo1}
A_k = \frac{g_{k-1}^\dagger H_0  g_{k-1} \tau_k}{T}, \quad k=1,\dotsc,l.
\end{align}
By construction, $A = \sum_{k=1}^l A_k = H_{\mathrm{targ}}$.

\subsubsection{Applying higher-order Trotter formulas}
We now approximate $e^{-iH_{\mathrm{targ}}T}$ using higher-order Trotter formulas for the operators $\{A_k\}$ defined in Eq.~\eqref{exponent_algo1}. Applying the second-order formula $\mathcal{S}_2(t)$ from Eq.~\eqref{Trotter_2nd} with $t=\alpha T$ gives
\begin{align}\label{S2_ideal}
\mathcal{S}_2(\alpha T)
&= \prod_{k=0}^{l-1} e^{-ig_k^\dagger H_0 g_k \frac{\alpha \tau_{k+1}}{2}}
   \prod_{k=l-1}^{0} e^{-ig_k^\dagger H_0 g_k \frac{\alpha \tau_{k+1}}{2}}.
\end{align}
For order $2p$ with $p \geq 2$, the recursion in Eq.~\eqref{Trotter_2kth} gives
\begin{align}
    \mathcal{S}_{2p}(T) = [\mathcal{S}_{2p-2}(u_pT)]^2  \mathcal{S}_{2p-2}((1-4u_p)T)  [\mathcal{S}_{2p-2}(u_pT)]^2,
\end{align}
which expands into $5^{p-1}$ copies of $\mathcal{S}_2(\alpha_j T)$, where each coefficient $\alpha_j$ is a product of $p-1$ factors drawn from $\{u_k, 1-4u_k\}_{k=2}^{p}$.

\subsubsection{Compiling back to pulse sequences}
Each copy of $\mathcal{S}_2(\alpha_j T)$ can be rewritten in terms of pulse operations by inverting the toggling-frame mapping. From Eq.~\eqref{S2_ideal},
\begin{align}\label{S2_pulse}
    \mathcal{S}_2(\alpha T) = \prod_{k=1}^{l} P_k e^{-iH_0\frac{\alpha\tau_k}{2}} \prod_{k=l}^{1} e^{-iH_0\frac{\alpha\tau_k}{2}} P_k^\dagger.
\end{align}
Thus each block is a palindromic sequence of $2l$ pulses,
$\{P_1,\dots,P_l,P_l^\dagger,\dots,P_1^\dagger\}$. For $\alpha=1$, this reduces to the standard time-symmetrization of the first-order construction. Consequently, $\mathcal{S}_{2p}(T)$ consists of $5^{p-1}$ such blocks, for a total of $2l\cdot 5^{p-1}$ pulses drawn from $\{P_k,P_k^\dagger\}$.

For $p\geq 2$, some coefficients $\alpha_j$ are negative, since $1-4u_k<0$. The corresponding free-evolution segments $e^{-iH_0\alpha_j\tau_k/2}$ therefore require negative evolution under $H_0$. Their implementation is discussed in Sec.~\ref{sec:negative_time}.

\subsubsection{Robust implementation via dynamically corrected gates}

The sequence $\mathcal{S}_{2p}(T)$ contains $2l\cdot 5^{p-1}$ ideal operations drawn from $\{P_k,P_k^\dagger\}$. A naive finite-width implementation, obtained by replacing each ideal pulse by its finite-width counterpart, introduces an $\mathcal{O}(\Lambda t_p)$ error per pulse and hence a total error of order $\mathcal{O}(5^{p-1}l\Lambda t_p)$.

To suppress this contribution, we replace each ideal gate by a dynamically corrected gate constructed from the stretched and reversed control functions introduced in Sec.~\ref{sec:background-and-notation}, following the DCG construction of Ref.~\cite{khodjasteh2009dynamical,khodjasteh2009dynamically,khodjasteh2010arbitrarily}. For each $W\in\{P_k,P_k^\dagger\}$, let $\widetilde{W}^{[q]}$ denote a $q$th-order DCG implementation satisfying
\begin{align}\label{DCG_error}
    \|\widetilde{W}^{[q]}-W\| = \mathcal{O}\big((\kappa^q\Lambda t_p)^{q+1}\big),
\end{align}
where $q\ge 1$ is the robustness order and $\kappa>1$ is a construction-dependent constant. The explicit DCG constructions used here are reviewed in Appendix~\ref{app:dcg}.

\subsubsection{Error bound}
Let $\widetilde{\mathcal{S}}_{2p}(T)$ denote the sequence obtained from $\mathcal{S}_{2p}(T)$ by replacing each ideal gate $W\in\{P_k,P_k^\dagger\}$ by its DCG implementation $\widetilde{W}^{[q]}$. Then the total error consists of the Trotter error of $\mathcal{S}_{2p}(T)$ and the DCG implementation error.

\begin{theorem}\label{thm_algo1}
The sequence $\widetilde{\mathcal{S}}_{2p}(T)$ obeys the bound
\begin{align}\label{alg1_bigO_clean}
\begin{aligned}
\big\|\widetilde{\mathcal{S}}_{2p}(T)-e^{-iH_{\mathrm{targ}}T}\big\| & = \mathcal{O} \big((\Lambda T_c)^{2p+1}\big) \\
& \quad + \mathcal{O} \big(5^{p-1}l (\kappa^q\Lambda t_p)^{q+1}\big).
\end{aligned}
\end{align}
\end{theorem}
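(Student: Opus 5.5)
The proof splits the error with the triangle inequality:
\begin{align*}
\big\|\widetilde{\mathcal{S}}_{2p}(T)-e^{-iH_{\mathrm{targ}}T}\big\| \le \big\|\widetilde{\mathcal{S}}_{2p}(T)-\mathcal{S}_{2p}(T)\big\| + \big\|\mathcal{S}_{2p}(T)-e^{-iH_{\mathrm{targ}}T}\big\|,
\end{align*}
where $\mathcal{S}_{2p}(T)$ is the ideal-pulse product obtained by compiling the Trotter formula back via Eq.~\eqref{S2_pulse}. The second term is a pure Trotter error for the operators $A_k$ of Eq.~\eqref{exponent_algo1}. The first term is a pulse-implementation error, and I bound it by telescoping over gates.

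\textbf{Trotter term.} I invoke Eq.~\eqref{Trotter_Error} with $t=T$, which gives an error $\mathcal{O}(\alpha_{\mathrm{comm}}T^{2p+1})$. Each $A_k$ is unitarily conjugate to $H_0\tau_k/T$, so $\|A_k\|\le \Lambda\tau_k/T$. Bounding the nested commutators crudely by $2^{2p}\prod_j\|A_{k_j}\|$ and summing over all indices gives
\begin{align*}
\alpha_{\mathrm{comm}}T^{2p+1}\le 2^{2p}\Big(\sum_k\Lambda\tau_k\Big)^{2p+1}=2^{2p}(\Lambda T_c)^{2p+1},
\end{align*}
since $\sum_k\tau_k=T_c$ for ideal pulses. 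For fixed $p$ the constant is absorbed into $\mathcal{O}$. The Trotter bound is stated for small $t$, so I need to be careful that the relevant smallness parameter is $\sum_k\|A_k\|T=\Lambda T_c$ and not $T$ itself. I would check that the cited remainder bound (e.g.\ the commutator-scaling bound of Childs et al.) is an explicit inequality holding uniformly under a condition such as $\Lambda T_c=\mathcal{O}(1)$. If it is not, I would re-derive it from the integral remainder representation with the norms of the $A_k$ in place of $t$.

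\textbf{Implementation term.} Write $\mathcal{S}_{2p}(T)=\prod_{m=1}^{M}V_m$, where the $V_m$ are either free evolutions $e^{-iH_0\alpha_j\tau_k/2}$ or gates $W\in\{P_k,P_k^\dagger\}$, and there are $N=2l\cdot5^{p-1}$ gates. The sequence $\widetilde{\mathcal{S}}_{2p}(T)$ replaces only the gates by $\widetilde{W}^{[q]}$; the free evolutions are the same. The negative-time segments are taken as exactly implemented here, as in the theorem's idealization, with their implementation deferred to Sec.~\ref{sec:negative_time}. All factors are unitary, so the standard hybrid argument $\|\prod_m X_m-\prod_m Y_m\|\le\sum_m\|X_m-Y_m\|$ applies. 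Inserting Eq.~\eqref{DCG_error} gives
\begin{align*}
\big\|\widetilde{\mathcal{S}}_{2p}(T)-\mathcal{S}_{2p}(T)\big\|\le N\,\mathcal{O}\big((\kappa^q\Lambda t_p)^{q+1}\big)=\mathcal{O}\big(5^{p-1}l(\kappa^q\Lambda t_p)^{q+1}\big).
\end{align*}

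\textbf{Main obstacle.} The delicate point is making sure the DCG error really is a per-gate quantity independent of context. Each $\widetilde{W}^{[q]}$ occupies extra time during which $H_0$ acts, and that time is accounted for inside the DCG rather than shortening the adjacent free evolutions. I would therefore fix the convention that $\widetilde{W}^{[q]}$ is inserted between the unchanged free-evolution segments, as in the definition of $\widetilde{\mathcal{S}}_{2p}(T)$. With this convention the constant in Eq.~\eqref{DCG_error} is uniform over the finitely many $W$, by Appendix~\ref{app:dcg}. Adding the two bounds then yields Eq.~\eqref{alg1_bigO_clean}.
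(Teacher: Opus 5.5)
Your proposal is correct and follows the paper's proof essentially step for step: the same triangle-inequality split, the same crude bound $\alpha_{\mathrm{comm}} T^{2p+1}\le 2^{2p}\bigl(\sum_k\|A_k\|T\bigr)^{2p+1}=\mathcal O\bigl((\Lambda T_c)^{2p+1}\bigr)$, and the same telescoping over the $2l\cdot 5^{p-1}$ DCG-replaced gates. Your extra remarks (that the Trotter remainder must hold with $\Lambda T_c$ as the smallness parameter, that negative-time segments are treated as exact, and that DCGs are inserted between unchanged free evolutions) only make explicit conventions the paper leaves implicit.
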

\begin{proof}
By the triangle inequality,
\begin{align}
\begin{aligned}
\big\|\widetilde{\mathcal{S}}_{2p}(T)-e^{-iH_{\mathrm{targ}}T}\big\| & \leq \underbrace{\big\|\mathcal{S}_{2p}(T)-e^{-iH_{\mathrm{targ}}T}\big\|}_{\text{Trotter error}} \\
& \quad + \underbrace{\big\|\widetilde{\mathcal{S}}_{2p}(T)-\mathcal{S}_{2p}(T)\big\|}_{\text{DCG error}}.
\end{aligned}
\end{align}
For the first term, the decomposition~\eqref{exponent_algo1} gives $\|A_k\|=\Lambda\tau_k/T$, so $\alpha_{\mathrm{comm}} \le 2^{2p}\big(\sum_k\|A_k\|\big)^{2p+1} =\mathcal{O}\big((\Lambda T_c/T)^{2p+1}\big)$. Eq.~\eqref{Trotter_Error} therefore yields $\|\mathcal{S}_{2p}(T)-e^{-iH_{\mathrm{targ}}T}\| = \mathcal{O}\big((\Lambda T_c)^{2p+1}\big)$. For the second term, $\mathcal{S}_{2p}(T)$ contains $2l\cdot 5^{p-1}$ ideal pulse operations, each replaced by a DCG with error $\mathcal{O}\big((\kappa^q\Lambda t_p)^{q+1}\big)$ by Eq.~\eqref{DCG_error}. A standard telescoping argument then gives $\|\widetilde{\mathcal{S}}_{2p}(T)-\mathcal{S}_{2p}(T)\| =\mathcal{O}\big(5^{p-1}l(\kappa^q\Lambda t_p)^{q+1}\big)$. Combining the two bounds proves the claim.
\end{proof}

Increasing the Trotter order $p$ suppresses the first term but amplifies the second through the factor $5^{p-1}$, leading to a trade-off governed by $\Lambda$, $T_c$, and $t_p$.

\subsection{Construction 2: Finite-width pulse-sequence input}

Consider a first-order finite-width pulse sequence as in Eq.~\eqref{Noisy_DD},
\begin{align}\label{algo2_starting_point}
    U(T_c)=\prod_{k=1}^{l} \widetilde{P}_k e^{-iH_0\tau_k},
\end{align}
satisfying the first-order condition~\eqref{finite_width_Omega1} together with the robustness conditions~\eqref{free_evolution_correction}--\eqref{pulse_width_correction}. Then $\|U(T_c)-e^{-iH_{\mathrm{targ}}T}\| = \mathcal{O}\big((\Lambda T_c)^2\big)$, with $T_c=\sum_{k=1}^l(\tau_k+t_p)$. We adapt Construction 1 to this finite-width setting.

\subsubsection{First-order Trotter mapping}
Using Eq.~\eqref{P_tilde_def}, write $\widetilde{P}_k=P_k e^{-i\Phi_k}$ with $\Phi_k=\Phi_k^{(1)}+\mathcal{O}\big((\Lambda t_p)^2\big)$. Then Eq.~\eqref{Noisy_DD} becomes
\begin{align}\label{algo2_toggling}
    U(T_c) = \prod_{k=0}^{l-1}
    e^{-ig_k^\dagger \Phi_{k+1} g_k}
    e^{-ig_k^\dagger H_0 g_k\tau_{k+1}}.
\end{align}
Using Eqs.~\eqref{free_evolution_correction}--\eqref{pulse_width_correction}, we introduce an auxiliary first-order Trotter formula $\mathcal{S}_1(T)$ with
\begin{align}\label{exponent_algo2}
    A_{2k+1} = \frac{cg_k^\dagger \Phi_{k+1}^{(1)} g_k}{T},\quad A_{2k+2} = \frac{g_k^\dagger H_0 g_k\tau_{k+1}}{T},
\end{align}
for $k=0,\dotsc,l-1$, where $c\ge 1$ is a free parameter. By construction, $A=\sum_{j=1}^{2l}A_j=H_{\mathrm{targ}}$, since $\sum_{k=0}^{l-1} g_k^\dagger \Phi_{k+1}^{(1)} g_k=0$. The corresponding first-order Trotter formula for $e^{-iH_{\mathrm{targ}}T}$ is
\begin{align}\label{S1_algo2}
    \mathcal{S}_1(T) = \prod_{k=0}^{l-1} e^{-icg_k^\dagger \Phi_{k+1}^{(1)} g_k} e^{-ig_k^\dagger H_0 g_k\tau_{k+1}}.
\end{align}

\subsubsection{Applying higher-order Trotter formulas}

We now approximate $e^{-iH_{\mathrm{targ}}T}$ using higher-order Trotter formulas for the operators $\{A_j\}$ defined in Eq.~\eqref{exponent_algo2}. Applying the second-order formula $\mathcal{S}_2(t)$ from Eq.~\eqref{Trotter_2nd} with $t=\alpha T$ gives
\begin{align}\label{S2_algo2}
\begin{aligned}
    \mathcal{S}_2(\alpha T)
    &= \prod_{k=0}^{l-1} e^{-icg_k^\dagger \Phi_{k+1}^{(1)} g_k\frac{\alpha}{2}} e^{-ig_k^\dagger H_0 g_k\frac{\alpha\tau_{k+1}}{2}} \\
    &\quad \cdot \prod_{k=l-1}^{0} e^{-ig_k^\dagger H_0 g_k\frac{\alpha\tau_{k+1}}{2}} e^{-icg_k^\dagger \Phi_{k+1}^{(1)} g_k\frac{\alpha}{2}}.
\end{aligned}
\end{align}
For $p\geq 2$, the recursion in Eq.~\eqref{Trotter_2kth} gives
\begin{align}
    \mathcal{S}_{2p}(T) = [\mathcal{S}_{2p-2}(u_pT)]^2 \mathcal{S}_{2p-2}((1-4u_p)T) [\mathcal{S}_{2p-2}(u_pT)]^2,
\end{align}
which expands into $5^{p-1}$ copies of $\mathcal{S}_2(\alpha_j T)$, where each coefficient $\alpha_j$ is a product of $p-1$ factors drawn from $\{u_k,1-4u_k\}_{k=2}^{p}$.

\subsubsection{Compiling back to robust pulse sequences}
Inverting the toggling-frame mapping in each copy of $\mathcal{S}_2(\alpha T)$ and setting $\beta:=c|\alpha|/2$, we obtain pulse factors of the form $P_k e^{\mp i\beta\Phi_k^{(1)}}$ and $e^{\mp i\beta\Phi_k^{(1)}}P_k^\dagger$, together with scaled free evolutions $e^{-iH_0\alpha\tau_k/2}$, where the upper sign applies for $\alpha>0$ and the lower sign for $\alpha<0$. The following lemma shows that these pulse factors can be implemented, up to second order in $t_p$, using only stretched and reversed versions of the input control pulses.

\begin{lemma}\label{lemma_pulse_impl}
Assume Eq.~\eqref{pulse_width_correction}. For $k=1,\dots,l$ and $\beta>0$, the following hold:
\begin{align}
    P_k e^{-i\beta\Phi_k^{(1)}}
    &= \widetilde{P}_k(\beta t_p) + \mathcal{O}\big((\Lambda\beta t_p)^2\big), \label{pulse_1}\\
    e^{-i\beta\Phi_k^{(1)}} P_k^\dagger
    &= \widetilde{P}_k^{\mathrm{rev}}(\beta t_p)
    + \mathcal{O}\big((\Lambda\beta t_p)^2\big), \label{pulse_2}\\
    e^{i\beta\Phi_k^{(1)}}P_k^\dagger
    &= \prod_{j=1}^{k-1}\widetilde{P}_j(\beta t_p) \prod_{j=k+1}^{l}\widetilde{P}_j(\beta t_p) \nonumber \\
    & \qquad + \mathcal{O}\big((l\Lambda\beta t_p)^2\big), \label{pulse_3}\\
    P_k e^{i\beta\Phi_k^{(1)}} &= \prod_{j=l}^{k+1}\widetilde{P}_j^{\mathrm{rev}}(\beta t_p) \prod_{j=k-1}^{1}\widetilde{P}_j^{\mathrm{rev}}(\beta t_p) \nonumber \\
     & \qquad + \mathcal{O}\big((l\Lambda\beta t_p)^2\big). \label{pulse_4}
\end{align}
\end{lemma}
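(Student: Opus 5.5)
The plan is to treat the four identities in two groups. Eqs.~\eqref{pulse_1}--\eqref{pulse_2} follow from a direct interaction-frame computation for a single stretched or reversed pulse. Eqs.~\eqref{pulse_3}--\eqref{pulse_4} follow from the robustness condition~\eqref{pulse_width_correction}, which makes the full product of (stretched) pulses approximately the identity, combined with a cyclic-shift argument. Throughout we use the Magnus expansion exactly as in Eq.~\eqref{P_tilde_def}. Since each control term involves a single generator $H_{P_k}$, all the control propagators $U_{P_k}(t)$ for fixed $k$ commute with one another.

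For Eq.~\eqref{pulse_1}, we move into the interaction frame of the stretched control $\frac1\beta f_k(t/\beta)H_{P_k}$. Its propagator is $U_{P_k}(t/\beta)$, which equals $P_k$ at $t=\beta t_p$. Hence
\begin{align*}
\widetilde P_k(\beta t_p)=P_k\,\mathcal T\exp\Big(-i\int_0^{\beta t_p}U_{P_k}^\dagger(t/\beta)H_0U_{P_k}(t/\beta)\,dt\Big).
\end{align*}
Substituting $s=t/\beta$ shows that the first Magnus term is exactly $\beta\Phi_k^{(1)}$. The remaining Magnus terms have norm $\mathcal O((\Lambda\beta t_p)^2)$, since the total evolution time is $\beta t_p$ and the interaction-frame generator has norm $\Lambda$. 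This gives Eq.~\eqref{pulse_1}.

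For Eq.~\eqref{pulse_2}, the reversed control $-\frac1\beta f_k(t_p-t/\beta)H_{P_k}$ has propagator $V(t)=U_{P_k}(t_p-t/\beta)\,P_k^\dagger$, and $V(\beta t_p)=P_k^\dagger$. The toggling-frame Hamiltonian is
\begin{align*}
V^\dagger H_0V=P_k\,U_{P_k}^\dagger(t_p-t/\beta)\,H_0\,U_{P_k}(t_p-t/\beta)\,P_k^\dagger .
\end{align*}
Its time integral is $\beta P_k\Phi_k^{(1)}P_k^\dagger$. Therefore $\widetilde P_k^{\mathrm{rev}}(\beta t_p)=P_k^\dagger e^{-i\beta P_k\Phi_k^{(1)}P_k^\dagger}+\mathcal O((\Lambda\beta t_p)^2)$, and this equals $e^{-i\beta\Phi_k^{(1)}}P_k^\dagger$ up to the same error.

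For Eq.~\eqref{pulse_3}, we first apply Eq.~\eqref{pulse_1} to every factor and push all pulses to the left, exactly as in Eq.~\eqref{algo2_toggling}. This gives
\begin{align*}
\prod_{j=1}^{l}\widetilde P_j(\beta t_p)=g_l\prod_{j=0}^{l-1}e^{-i\beta g_j^\dagger\Phi_{j+1}^{(1)}g_j}+\mathcal O\big(l(\Lambda\beta t_p)^2\big).
\end{align*}
We then combine the $l$ exponentials by a BCH/Magnus estimate. The first-order exponent $\beta\sum_j g_j^\dagger\Phi_{j+1}^{(1)}g_j$ vanishes by Eq.~\eqref{pulse_width_correction}. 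The commutator remainder is $\mathcal O\big((\sum_j\beta\|\Phi_{j+1}^{(1)}\|)^2\big)=\mathcal O((l\Lambda\beta t_p)^2)$. Because $g_l=I$, the full product is $I+\mathcal O((l\Lambda\beta t_p)^2)$.

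Now split the full product as $L\,X\,R$ with $X=\widetilde P_k(\beta t_p)$, $R=\prod_{j=1}^{k-1}\widetilde P_j(\beta t_p)$ and $L=\prod_{j=k+1}^{l}\widetilde P_j(\beta t_p)$. Then
\begin{align*}
R\,L=R\,(LXR)\,R^{-1}X^{-1}=X^{-1}+\mathcal O\big((l\Lambda\beta t_p)^2\big),
\end{align*}
using unitary invariance of the norm. Moreover, $X^{-1}=e^{i\beta\Phi_k^{(1)}}P_k^\dagger+\mathcal O((\Lambda\beta t_p)^2)$ by Eq.~\eqref{pulse_1}. Since $RL$ is exactly the ordered product on the right of Eq.~\eqref{pulse_3} (with the convention $\prod_{i=a}^b A_i:=A_b\cdots A_a$), this proves Eq.~\eqref{pulse_3}.

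Eq.~\eqref{pulse_4} follows the same pattern using Eq.~\eqref{pulse_2}. The product $\prod_{j=l}^{1}\widetilde P_j^{\mathrm{rev}}(\beta t_p)$ approximates $\prod_j e^{-i\beta\Phi_j^{(1)}}P_j^\dagger$ in reverse order. Moving the pulses outward shows that its exponent is again $\beta\sum_j g_{j-1}^\dagger\Phi_j^{(1)}g_{j-1}=0$, up to a conjugation by $g_l=I$. So this product is $I+\mathcal O((l\Lambda\beta t_p)^2)$, and cyclically removing the $k$th factor leaves $\big(e^{-i\beta\Phi_k^{(1)}}P_k^\dagger\big)^{-1}=P_ke^{i\beta\Phi_k^{(1)}}$.

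I expect the main obstacle to be the error bookkeeping in the product step. We must check that the second-order BCH remainder for $l$ factors is controlled by $(\sum_j\|\beta\Phi_j^{(1)}\|)^2$, giving $\mathcal O((l\Lambda\beta t_p)^2)$ rather than something worse. We must also verify the index ordering in the cyclic shift against the product convention, so that the remaining pulses appear in exactly the order stated in Eqs.~\eqref{pulse_3}--\eqref{pulse_4}.
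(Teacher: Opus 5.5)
Your proof is correct, and for Eqs.~\eqref{pulse_1}, \eqref{pulse_3} and \eqref{pulse_4} it matches the paper's proof.

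\textbf{Where you coincide with the paper.} The paper first reduces to $\beta=1$ via Proposition~\ref{prop:uniform_stretching}; you do the substitution $s=t/\beta$ inline instead. It then applies the first-order Trotter bound to the vanishing sum~\eqref{pulse_width_correction}, using $g_l=I$. Finally it isolates the $k$th factor by left and right multiplication, which is exactly your cyclic-shift identity $RL=R(LXR)R^{-1}X^{-1}$. The one ordering difference is minor. The paper bounds the idealized product $\prod_j P_j e^{-i\Phi_j^{(1)}}$ and substitutes \eqref{pulse_1} or \eqref{pulse_2} afterwards. You substitute first and show directly that the stretched input cycle $\prod_j\widetilde P_j(\beta t_p)$ is itself $I+\mathcal O((l\Lambda\beta t_p)^2)$.

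\textbf{Where you differ: Eq.~\eqref{pulse_2}.} The paper does not compute the reversed pulse's toggling frame directly. Instead it forms the composite $\widetilde P_k^{\mathrm{rev}}\widetilde P_k$. Its net control action is trivial and its control propagator is symmetric about $t_p$, so its first Magnus term is $2\Phi_k^{(1)}$. The paper then strips off $\widetilde P_k=P_ke^{-i\Phi_k}$ by right-multiplying with $\widetilde P_k^\dagger$.

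Your route is a direct change of variables:
\begin{itemize}
\item the propagator is $V(t)=U_{P_k}(t_p-t/\beta)P_k^\dagger$;
\item its first Magnus term is $\beta P_k\Phi_k^{(1)}P_k^\dagger$;
\item the exact identity $P_k^\dagger e^{-i\beta P_k\Phi_k^{(1)}P_k^\dagger}=e^{-i\beta\Phi_k^{(1)}}P_k^\dagger$ finishes the step.
\end{itemize}
This is shorter and gives the exact first-order error generator of the reversed pulse itself. The paper's detour has its own payoff: it establishes $\widetilde W^{\mathrm{rev}}\widetilde W=e^{-2i\Phi_W^{(1)}}+\mathcal O((\Lambda t_p)^2)$ along the way, which it reuses for the first-order DCG balance pair in Appendix~\ref{app:dcg}.

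A small remark: your appeal to commutativity of the $U_{P_k}(t)$ is not actually needed. The identity $V(t)=U_{P_k}(t_p-t/\beta)\,U_{P_k}(t_p)^\dagger$ holds for any time-reversed, sign-flipped control, by uniqueness of solutions to the Schr\"odinger equation. The same is true of the paper's symmetry claim for the composite pulse.
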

The proof of Lemma~\ref{lemma_pulse_impl} is provided in Appendix~\ref{app:proof_lemma_pulse_impl}.

Since $t_p$ is the minimum pulse duration, all factors in Lemma~\ref{lemma_pulse_impl} must be implemented by stretching rather than compression. As each $\alpha_j$ is a product of factors drawn from $\{u_k,1-4u_k\}_{k=2}^{p}$, its smallest possible magnitude is $\prod_{k=2}^{p}u_k$. Hence the minimal such choice is
\begin{align}
    c = c_p := \frac{2}{\prod_{k=2}^{p}u_k} = 2\prod_{k=2}^{p}\bigl(4-4^{1/(2k-1)}\bigr).
\end{align}
With this choice, the worst-case stretching factor in the compiled order-$2p$ sequence is $\max_j \beta_j = \frac{c_p}{2}\max_j |\alpha_j| = 4^{\sum_{r=1}^{p-1}1/(2r+1)}$. 

\subsubsection{Error bound}

The total error again consists of the Trotter error and the finite-width implementation error from Lemma~\ref{lemma_pulse_impl}.

\begin{theorem}\label{thm_algo2}
The sequence $\widetilde{\mathcal{S}}_{2p}(T)$ obeys the bound
\begin{align}\label{alg2_bigO_clean}
\begin{aligned}
\big\|\widetilde{\mathcal{S}}_{2p}(T)-e^{-iH_{\mathrm{targ}}T}\big\|
&= \mathcal{O}\big((\Lambda(T_c+ (c_p-1) l t_p))^{2p+1}\big) \\
&\quad + \mathcal{O}\big(5^{p-1}lp^2(l\Lambda t_p)^2\big),
\end{aligned}
\end{align}
where $c_p=2\prod_{k=2}^{p}\bigl(4-4^{1/(2k-1)}\bigr)$.
\end{theorem}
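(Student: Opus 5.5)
The plan mirrors the proof of Theorem~\ref{thm_algo1}. By the triangle inequality we split the error into a Trotter part and an implementation part:
\begin{align}
\big\|\widetilde{\mathcal{S}}_{2p}(T)-e^{-iH_{\mathrm{targ}}T}\big\| \le \big\|\mathcal{S}_{2p}(T)-e^{-iH_{\mathrm{targ}}T}\big\| + \big\|\widetilde{\mathcal{S}}_{2p}(T)-\mathcal{S}_{2p}(T)\big\|.
\end{align}
Here $\mathcal{S}_{2p}(T)$ is the order-$2p$ Trotter formula for the auxiliary decomposition~\eqref{exponent_algo2} with $c=c_p$, and $\widetilde{\mathcal{S}}_{2p}(T)$ is its compilation via Lemma~\ref{lemma_pulse_impl}.

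\textbf{Trotter term.} We bound $\alpha_{\mathrm{comm}}$ crudely by $2^{2p}\big(\sum_j\|A_j\|\big)^{2p+1}$. The free-evolution terms contribute $\sum_k\|A_{2k+2}\|=\Lambda\sum_k\tau_k/T$. For the pulse terms, $\|\Phi_{k+1}^{(1)}\|\le\Lambda t_p$ by Eq.~\eqref{magnus_1}, so $\sum_k\|A_{2k+1}\|\le c_p l\Lambda t_p/T$. Since $T_c=\sum_k\tau_k+lt_p$, we get $\sum_j\|A_j\|\le \Lambda(T_c+(c_p-1)lt_p)/T$. The constraint $\sum_j A_j=H_{\mathrm{targ}}$ holds by Eq.~\eqref{pulse_width_correction}. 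Eq.~\eqref{Trotter_Error} evaluated at $t=T$ then gives the first term of Eq.~\eqref{alg2_bigO_clean}.

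\textbf{Implementation term.} The expansion of $\mathcal{S}_{2p}(T)$ contains $5^{p-1}$ blocks $\mathcal{S}_2(\alpha_jT)$. In each block, inverting the toggling frame as in Eq.~\eqref{S2_pulse} leaves the free evolutions $e^{-iH_0\alpha_j\tau_k/2}$ exact; they are realized exactly, up to the negative-time implementation treated separately. The block also contains $2l$ pulse factors of the four types in Lemma~\ref{lemma_pulse_impl}, with $\beta_j=c_p|\alpha_j|/2\ge 1$, so all required pulses are stretches. The first thing to check is that the toggling-frame inversion produces exactly these factors: adjacent exponentials $e^{-i\beta g_k^\dagger\Phi^{(1)}_{k+1}g_k}$ must combine with the conjugating $g$'s into $P_{k+1}e^{\mp i\beta\Phi^{(1)}_{k+1}}$ and its mirror image. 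This is the same bookkeeping as in Eq.~\eqref{S2_pulse}, with the $\Phi$ exponentials sitting next to the pulses.

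Each factor is then replaced by its implementation from Lemma~\ref{lemma_pulse_impl}. For $\alpha_j<0$, forms~\eqref{pulse_3}--\eqref{pulse_4} are used, with error $\mathcal{O}((l\Lambda\beta_jt_p)^2)$. A telescoping argument over unitaries then gives
\begin{align}
\big\|\widetilde{\mathcal{S}}_{2p}(T)-\mathcal{S}_{2p}(T)\big\| \le \sum_{j=1}^{5^{p-1}} 2l\cdot\mathcal{O}\big((l\Lambda\beta_jt_p)^2\big).
\end{align}

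\textbf{Main obstacle.} The key step is bounding $\beta_j$ uniformly so that the sum collapses to $5^{p-1}lp^2(l\Lambda t_p)^2$. The worst-case stretch is $\max_j\beta_j=4^{\sum_{r=1}^{p-1}1/(2r+1)}$. Since $\sum_{r=1}^{p-1}1/(2r+1)\le\tfrac12\ln(2p-1)$, this gives $\beta_j\le(2p-1)^{\ln 2}\le 2p$, hence $\beta_j^2=\mathcal{O}(p^2)$, which produces the stated $p^2$ factor.

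Two further points need care. First, the constants hidden in the $\mathcal{O}$ of Lemma~\ref{lemma_pulse_impl} must be uniform in $\beta$; this requires $\Lambda\beta t_p$ small enough for the Magnus series of the stretched pulses to converge. Second, the telescoping bound requires every factor to be unitary or close to unitary with norm at most $1$, which holds because all pieces are genuine propagators. Adding the two bounds proves the theorem.
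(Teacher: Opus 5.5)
Your proposal is correct and follows the paper's proof essentially step for step. It uses the same triangle-inequality split, the same bound $\sum_j\|A_j\|T\le\Lambda(T_c+(c_p-1)lt_p)$ fed into Eq.~\eqref{Trotter_Error}, and the same telescoping over the $2l\cdot 5^{p-1}$ pulse factors compiled via Lemma~\ref{lemma_pulse_impl} with per-factor errors $\mathcal{O}((l\Lambda\beta_jt_p)^2)$. The only cosmetic difference is in the worst-case stretch: you bound it explicitly by $(2p-1)^{\ln 2}\le 2p$ through an integral comparison, whereas the paper states $\max_j\beta_j=\mathcal{O}(p^{\ln 2})$ and then relaxes $p^{2\ln 2}$ to $p^2$.
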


\begin{proof}
By the triangle inequality,
\begin{align}
\begin{aligned}
\big\|\widetilde{\mathcal{S}}_{2p}(T)-e^{-iH_{\mathrm{targ}}T}\big\| &\le \big\|\mathcal{S}_{2p}(T)-e^{-iH_{\mathrm{targ}}T}\big\| \\
&\quad + \big\|\widetilde{\mathcal{S}}_{2p}(T)-\mathcal{S}_{2p}(T)\big\|.
\end{aligned}
\end{align}
For the first term, the decomposition~\eqref{exponent_algo2} gives $\sum_{j=1}^{2l} \|A_j\|T \le \Lambda \sum_{k=1}^{l}(c_pt_p + \tau_k) = \Lambda\bigl(T_c+(c_p-1)lt_p\bigr)$, where we used $\|\Phi_k^{(1)}\|\le \Lambda t_p$ and $T_c=\sum_{k=1}^l\tau_k+lt_p$. Eq.~\eqref{Trotter_Error} therefore yields
$\|\mathcal{S}_{2p}(T)-e^{-iH_{\mathrm{targ}}T}\| = \mathcal{O}\big((\Lambda(T_c+(c_p-1) l t_p))^{2p+1}\big)$.

For the second term, each block $\mathcal{S}_2(\alpha_jT)$ contains $2l$ pulse factors, each implemented via Lemma~\ref{lemma_pulse_impl} with error $\mathcal{O}\big((l\Lambda c_p|\alpha_j|t_p/2)^2\big)$, since $\beta_j=c_p|\alpha_j|/2$. Summing over all $5^{p-1}$ blocks and using a telescoping argument gives $\|\widetilde{\mathcal{S}}_{2p}(T)-\mathcal{S}_{2p}(T)\| =\mathcal{O}\big(5^{p-1}l \max_j \frac{c_p^2\alpha_j^2}{4}(l\Lambda t_p)^2\big)$. Now each $\alpha_j$ is a product of $p-1$ factors drawn from $\{u_k,1-4u_k\}_{k=2}^{p}$, while $c_p/2=1/\prod_{k=2}^{p}u_k$. Since $|1-4u_k|/u_k=4^{1/(2k-1)}$, the maximum of $c_p|\alpha_j|/2$ is attained when every factor is $|1-4u_k|$. Hence $\max_j c_p|\alpha_j|/2=\prod_{k=2}^{p}4^{1/(2k-1)}=\mathcal{O}(p^{\ln 2})$. Therefore $\max_j c_p^2\alpha_j^2/4=\mathcal{O}(p^{2\ln 2})$, so $\|\widetilde{\mathcal{S}}_{2p}(T)-\mathcal{S}_{2p}(T)\| =\mathcal{O}\big(5^{p-1}lp^{2\ln 2}(l\Lambda t_p)^2\big)$. In particular, this is $\mathcal{O}\big(5^{p-1}lp^2(l\Lambda t_p)^2\big)$.
Combining the two bounds proves the claim.
\end{proof}

\subsection{Implementation of negative-time evolution}\label{sec:negative_time}

For $p\ge 2$, both constructions require segments of the form $e^{+iH_0\tau}$, since the higher-order Trotter formulas contain negative coefficients. A standard way to realize such segments is to start from dynamical decoupling (DD) sequence that suppresses the effect of $H_0$. Suppose a sequence of $l_{\mathrm{neg}}$ pulses satisfies
\begin{align}\label{eq:neg_dd_identity}
    U_{\mathrm{DD}}(\tau) := \prod_{k=1}^{l_{\mathrm{neg}}} P_k e^{-iH_0\tau},
    \quad
    \|U_{\mathrm{DD}}(\tau)-I\|\le \varepsilon_{\mathrm{neg}}(\tau).
\end{align}
Then
\begin{align}\label{eq:neg_dd_block}
    U_{\mathrm{neg}}(\tau):= \left(\prod_{k=2}^{l_{\mathrm{neg}}} P_k e^{-iH_0\tau}\right) P_1
\end{align}
implements negative-time evolution with the same accuracy:
\begin{align}
    \|U_{\mathrm{neg}}(\tau)-e^{+iH_0\tau}\| = \|U_{\mathrm{DD}}(\tau)-I\|
    \le \varepsilon_{\mathrm{neg}}(\tau),
\end{align}
where we used unitary invariance of the operator norm.

For ideal pulses, concatenated DD based on a unitary $1$-design yields arbitrarily high-order suppression of $\varepsilon_{\mathrm{neg}}(\tau)$. With finite-width pulses, however, a naive implementation accumulates pulse-width errors linearly in the number of pulses, so the DD block itself must be made robust. Two natural options are available. One may replace each pulse in a concatenated DD cycle by a dynamically corrected gate. Alternatively, one may start from an Eulerian DD cycle \cite{viola2003robust}, which is already first-order robust to pulse-width effects, and then apply the concatenation ideas. In Appendix~\ref{appendix_cdd}, we analyze both constructions in detail. 


\subsection{Multi-product formulas}\label{subsec:mpf}


When the goal is to estimate expectation values rather than to prepare the full evolved state, the overhead of implementing negative-time evolution under $H_0$ can be entirely avoided by using multi-product formulas (MPFs) \cite{childs2012hamiltonian}, which combine several second-order Trotter simulations and therefore require only positive-time evolution.

For either Construction~1 or 2, write
\begin{align}
    A=\sum_{\nu=1}^{d}A_\nu, \qquad \Gamma := \sum_{\nu =1}^d \|A_\nu\|,
\end{align}
where $d=l$ for Construction~1 and $d=2l$ for Construction~2, and let $\mathcal S_2(t)$ denote the corresponding second-order Trotter formula. For pairwise distinct integers $m_1,\dots,m_p\in\mathbb N$ and coefficients $b_1,\dots,b_p\in\mathbb R$, define the MPF
\begin{align}\label{eq:mpf_def}
    M^{(p)}(T):=\sum_{j=1}^{p} b_j \bigl[\mathcal S_2(T/m_j)\bigr]^{m_j}.
\end{align}
Assume that
\begin{align}\label{eq:mpf_condition}
    \sum_{j=1}^{p} b_j = 1,
    \qquad \sum_{j=1}^{p}\frac{b_j}{m_j^{2r}}=0, \quad r=1,\dots,p-1.
\end{align}
Since $\mathcal S_2$ is symmetric, the error of $\bigl[\mathcal S_2(T/m)\bigr]^m$ admits an asymptotic expansion in even powers of $1/m$. The conditions in Eq.~\eqref{eq:mpf_condition} cancel the first $p-1$ such terms, yielding \cite{childs2012hamiltonian}
\begin{align}\label{eq:mpf_operator_error}
    \left\| M^{(p)}(T)-e^{-iAT} \right\| =
    \mathcal O\left( (\Gamma T)^{2p+1} \right).
\end{align}
In our pulse-sequence setting, we do not implement the operator-valued linear combination in Eq.~\eqref{eq:mpf_def} coherently, since that would require ancillas and controlled operations. Instead, we use the hybrid MPF protocol, in which each unitary block is executed separately and the resulting expectation values are combined classically \cite{carrera2023well}. Define
\begin{align}
\begin{aligned}
    U_j(T) &:= \bigl[\mathcal S_2(T/m_j)\bigr]^{m_j},\\
    \rho(T) &:= e^{-iAT}\rho_0e^{iAT},\\
    \rho_j(T) &:= U_j(T)\rho_0U_j^\dagger(T).
\end{aligned}
\end{align}
Then the same cancellation conditions in Eq.~\eqref{eq:mpf_condition} imply that, for any observable $O$ with $\|O\|\le 1$, \cite{carrera2023well}
\begin{align}\label{eq:mpf_expectation_error}
    \left| \Tr\bigl(O\rho(T)\bigr) - \sum_{j=1}^{p} b_j\Tr\bigl(O\rho_j(T)\bigr) \right| = \mathcal O\left( (\Gamma T)^{2p+1} \right).
\end{align}
Applying the hybrid MPF construction to the robust second-order blocks obtained from Constructions~1 and~2 yields higher-order expectation-value estimation protocols that entirely avoid negative-time evolution, since only $\mathcal S_2$ blocks are used. For each construction, the total error consists of the ideal MPF error in Eq.~\eqref{eq:mpf_expectation_error} together with the finite-width implementation error of the corresponding robust second-order pulse sequence. The resulting explicit bounds are derived in Appendix~\ref{app:mpf_bounds}.

\section{Numerical experiments}\label{sec:numerics}

We illustrate the two constructions on several benchmark pulse sequences and verify the predicted reduction in simulation error. For Construction~1, we consider two ideal-pulse sequences: one for an inhomogeneous Ising model \cite{parrarodriguez2020digitalanalog} and one for a Heisenberg model generated from cross-resonance interactions \cite{gonzalezraya2021digitalanalog}. For Construction~2, we consider a finite-width sequence for an inhomogeneous Heisenberg chain \cite{bookatz2014hamiltonian}. In all cases, we construct and benchmark the corresponding fourth-order sequences.

We quantify the error using the gate overlap
\begin{align}
    1-F(U,V) := 1-\frac{|\Tr(U^\dagger V)|}{2^n},
\end{align}
which removes the effect of global phases. This quantity satisfies $1-F(U,V) \le \frac{1}{2}\|U-V\|^2$, and therefore provides a convenient proxy for the (quadratic) simulation error. Throughout, we use rectangular pulses for simplicity, although the constructions apply to arbitrary pulse shapes that admit time stretching and reversed implementations.

\subsection{Simulating an inhomogeneous Ising model from a fixed one}
\label{sec:numerics_Ising}

We begin with a simple example that isolates the pulse-width aspect of Construction~1. Consider the homogeneous all-to-all Ising Hamiltonian,
\begin{align}\label{Ising_H0}
    H_0 = J\sum_{i<j} Z_iZ_j,
\end{align}
together with local $X$ and $Y$ controls $H_c(t)=\sum_{i=1}^n \bigl(f_i^{(X)}(t)X_i + f_i^{(Y)}(t)Y_i\bigl)$. Our goal is to engineer the target evolution generated by
\begin{align}\label{Ising_Htarg}
    H_{\mathrm{targ}}=\sum_{i<j} J_{ij} Z_iZ_j,
\end{align}
for time $T$, with random couplings $J_{ij}\in[-1,1]$.

\begin{figure}[t!]
    \centering
    \includegraphics[width=8.6cm]{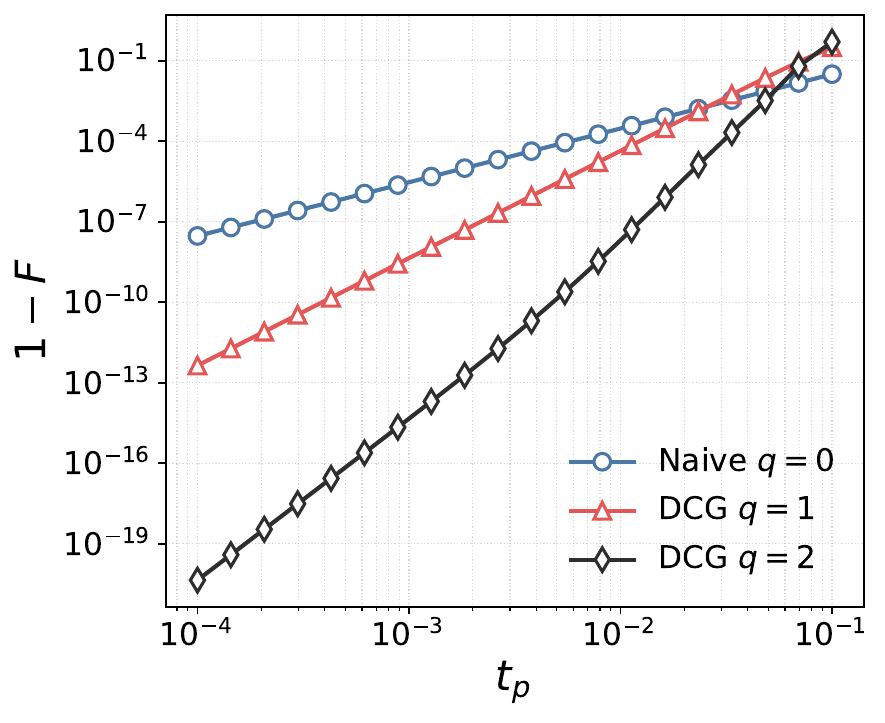}
    \caption{Simulation error versus pulse width $t_p$ for engineering the target Ising Hamiltonian in Eq.~\eqref{Ising_Htarg} from the homogeneous Ising Hamiltonian in Eq.~\eqref{Ising_H0} using the pulse sequence of Ref.~\cite{parrarodriguez2020digitalanalog}. The blue curve shows the naive finite-width implementation of the base sequence, while the red and black curves show the corresponding first- and second-order DCG implementations.}
    \label{fig:Algorithm1_Ising_to_Ising}
\end{figure}

For the $n=3$ numerical simulations, we use the pulse sequence of Ref.~\cite{parrarodriguez2020digitalanalog},
\begin{align}
    \{P_k\}=\{X_1X_2,X_2X_3,X_1X_2,X_2X_3\},
\end{align}
with interpulse delays
\begin{align}
    \{\tau_k\} =\frac{-T}{2J} \cdot \{0,J_{23}+J_{13},J_{12}+J_{23},J_{12}+J_{13}\}.
\end{align}
Some delays can be negative. In this model, however, negative-time evolution under $H_0$ can be implemented exactly by refocusing, since conjugation by local $X$ pulses only flips the signs of the commuting $ZZ$ terms. Moreover, all toggling-frame Hamiltonians commute, so the ideal sequence already realizes the target evolution exactly. Thus, for this example, Construction~1 reduces to the final DCG-replacement step for the finite-width implementation of $\{P_k,P_k^\dagger\}$. Details are given in Appendix~\ref{app:numerics_ising_dcg}.

Fig.~\ref{fig:Algorithm1_Ising_to_Ising} shows that the simulation error scales as $\mathcal O(t_p^2)$ for the naive finite-width implementation, and as $\mathcal O(t_p^4)$ and $\mathcal O(t_p^6)$ for the first- and second-order DCG implementations, respectively, in agreement with the analysis. Since the ideal sequence is exact in this commuting setting, the observed improvement arises entirely from the suppression of finite-width pulse errors.

\subsection{Simulating a Heisenberg model from cross-resonance interactions}
\label{sec:numerics_CR}

As a second example of Construction~1, we consider the cross-resonance (CR) Hamiltonian,
\begin{align}\label{system_CR}
    H_0 = J\sum_{i=1}^{n-1} X_iZ_{i+1},
\end{align}
together with local control $H_c(t) = \sum_{i=1}^n \bigl(f_i^{(Y)}(t) Y_i + f_i^{(Z)}(t) Z_i\bigr)$. Our target is to simulate the homogeneous nearest-neighbor Heisenberg Hamiltonian
\begin{align}\label{target_CR}
    H_{\mathrm{targ}} = J \sum_{i=1}^{n-1} \bigl(X_i X_{i+1} + Y_i Y_{i+1} + Z_i Z_{i+1}\bigr),
\end{align}
for time $T$.

Following Ref.~\cite{gonzalezraya2021digitalanalog}, we use the following ideal pulse sequence
\begin{align}
\begin{aligned}
    \{P_k\}_{k=1}^{4} & = \{H_{\mathrm{even}}R_E^{\dagger\otimes n}, H_{\mathrm{even}}R_E^{2\otimes n}H_{\mathrm{even}},  \\
    & \qquad H_{\mathrm{even}}R_E^{\dagger\otimes n}H_{\mathrm{even}}, H_{\mathrm{even}}\},
\end{aligned}
\end{align}
and interpulse delays
\begin{align}
    \{\tau_k\}=T\cdot \{0,1,1,1\},
\end{align}
where $H_{\mathrm{even}}$ denotes Hadamard gates applied on the even sites, and $R_E = e^{-i\frac{\pi}{3\sqrt{3}}(X+Y+Z)}$. This sequence simulates $e^{-iH_{\mathrm{targ}}T}$ with $\mathcal{O}(T_c^2)$ error where $T_c = 3T$. 

\begin{figure}[t!]
    \centering
    \includegraphics[width=8.6cm]{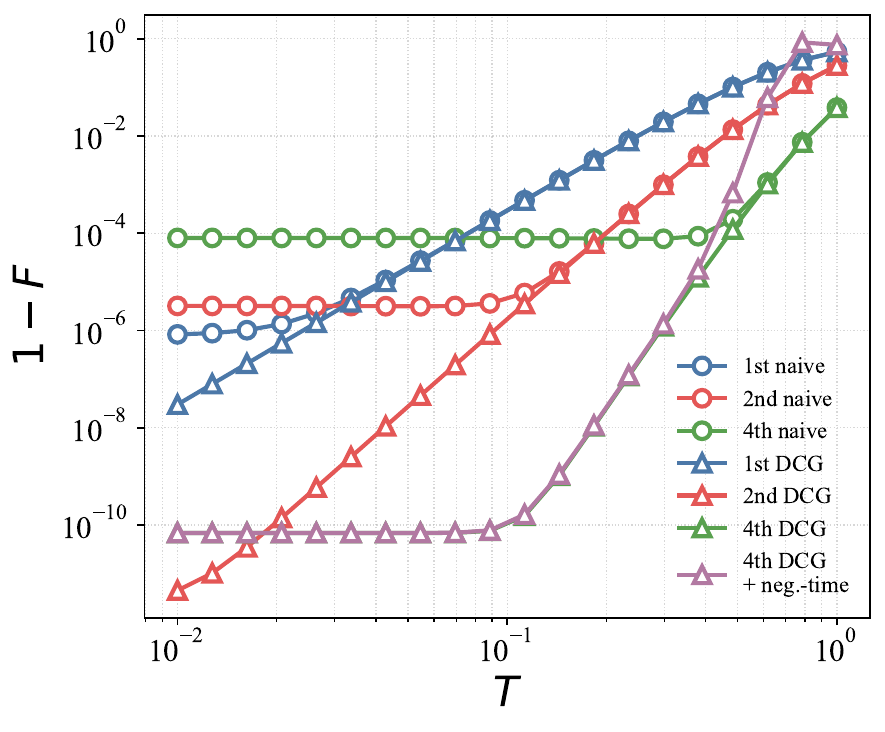}
    \caption{Simulation error versus $T$ for engineering the Heisenberg Hamiltonian in Eq.~\eqref{target_CR} from the cross-resonance Hamiltonian in Eq.~\eqref{system_CR} for $n=4,J=1$ and $t_p=10^{-4}$. Circle markers denote naive finite-width implementations of the first-, second-, and fourth-order Trotter sequences from Construction~1, while triangle markers denote the corresponding DCG-based implementations. The purple curve shows the fourth-order sequence with the negative-time segment included.}
    \label{fig:CR_result_main}
\end{figure}

We then apply Construction~1 to this base sequence to obtain robust second- and fourth-order implementations. Specifically, each ideal pulse $P_k$ and $P_k^\dagger$ is replaced by its DCG realization, while the negative-time segment required by the fourth-order Trotter formula is implemented using the concatenated Eulerian DD construction of Sec.~\ref{sec:negative_time}. The explicit pulse decompositions and the corresponding DCG implementations are given in Appendix~\ref{app:numerics_cr_DCG}.

Fig.~\ref{fig:CR_result_main} shows the simulation error for $n=4,J=1$ over $T\in[10^{-2},1]$ for $t_p=10^{-4}$. Since $T_c = 3T$, the observed scaling with respect to $T$ directly reflects the predicted high-order scaling in $T_c$. For each Trotter formula, the naive finite-width implementation exhibits a small-$T$ error plateau set by pulse-width effects, whereas the DCG-based implementation pushes this plateau to substantially smaller $T$. Away from this pulse-width-dominated regime, the observed scaling agrees with the expected order of the underlying Trotter formula. In particular, the fourth-order construction retains its predicted behavior even though its negative-time segment is implemented using finite-width pulses (see Appendix~\ref{app:numerics_cr_neg}). We also benchmark the hybrid multi-product formula construction of Sec.~\ref{subsec:mpf} for robustly estimating expectation values of observables with high accuracy (see Appendix~\ref{app:numerics_cr_mpf}).

\subsection{Simulating an anisotropic Heisenberg chain from a homogeneous one}

Our final example illustrates Construction~2, which takes as input a first-order pulse sequence that is already robust to finite pulse-width errors. We consider a one-dimensional nearest-neighbor homogeneous Heisenberg model on $n$ qubits
\begin{align}\label{num1_H0}
    H_0 = J \sum_{i=1}^{n-1} \bigl(X_iX_{i+1} + Y_iY_{i+1} + Z_iZ_{i+1}\bigr),
\end{align}
together with 1-local controls $H_c(t) = \sum_{i=1}^n \bigl(f^{(X)}_i(t) X_i + f^{(Y)}_i(t) Y_i\bigr)$. Our goal is to simulate the anisotropic Heisenberg Hamiltonian
\begin{align}\label{num1_Htarg}
    H_{\mathrm{targ}} = \sum_{i=1}^{n-1} \bigl(J_X X_iX_{i+1} + J_Y Y_iY_{i+1} + J_Z Z_iZ_{i+1}\bigr),
\end{align}
for a short time $T$ with random couplings $J_X,J_Y,J_Z \in [0,J]$.

\begin{figure}[t!]
    \centering
    \includegraphics[width=0.95\linewidth]{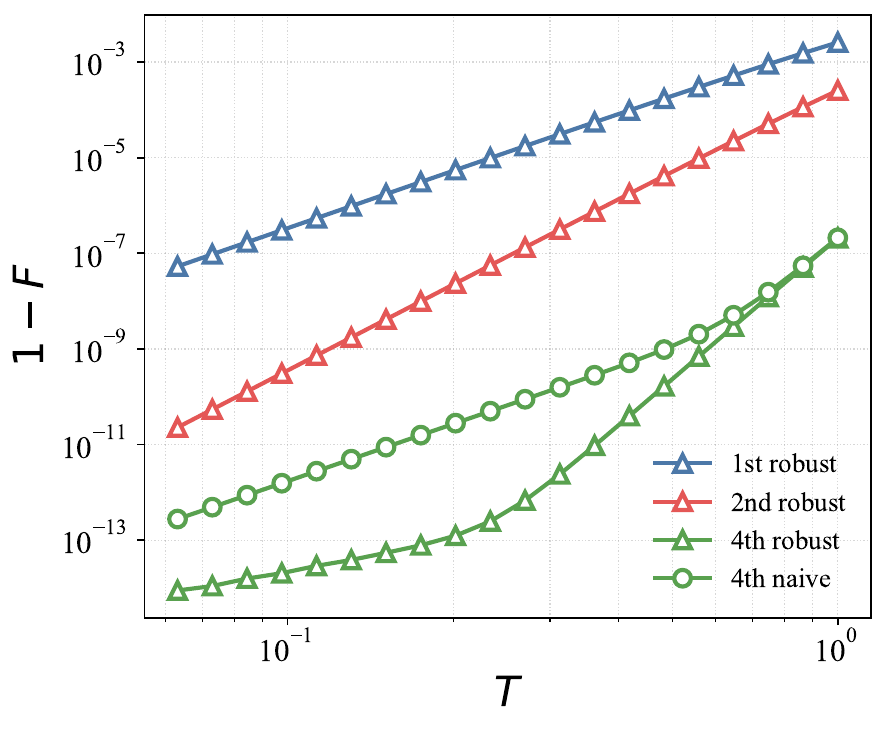}
    \caption{Simulation error for the anisotropic Heisenberg-chain example with pulse width $t_p=10^{-4}$. We compare the first-, second-, and fourth-order sequence from Construction~2, and the naive fourth-order sequence obtained without the robust compilation step of Lemma~\ref{lemma_pulse_impl}.}
    \label{fig:heisenberg_construction2}
\end{figure}

Let $\bar X:=\prod_{i~\mathrm{even}}X_i$, $\bar Y:=\prod_{i~\mathrm{even}}Y_i$, and $\bar Z:=\prod_{i~\mathrm{even}}Z_i$. Following Ref.~\cite{bookatz2014hamiltonian}, we use the pulse sequence
\begin{align}\label{num1_P}
    \{\widetilde P_k\} = \{
    \widetilde{\bar X},
    \widetilde{\bar Y},
    \widetilde{\bar X},
    \widetilde{\bar Y},
    \widetilde{\bar Y},
    \widetilde{\bar X},
    \widetilde{\bar Y},
    \widetilde{\bar X}\},
\end{align}
with interpulse delays
\begin{align}
    \{\tau_k\} = \frac{T}{4J}\cdot\{J_S,J_X,J_Z,J_Y,J_S,J_Y,J_Z,J_X\}.
\end{align}
where $J_S:=J_X+J_Y+J_Z$ with toggling operators
\begin{align}
    \{g_k\}_{k=0}^{7}=\{I,\bar X,\bar Z,\bar Y,I,\bar Y,\bar Z,\bar X\}.
\end{align}
Then Eqs.~\eqref{free_evolution_correction} and~\eqref{pulse_width_correction} are satisfied, namely
\begin{align}
    \sum_{k=0}^{7} g_k^\dagger H_0 g_k\tau_{k+1}=H_{\mathrm{targ}}T,
    \qquad
    \sum_{k=0}^{7} g_k^\dagger \Phi_{k+1}^{(1)} g_k = 0.
\end{align}
Thus, the resulting sequence simulates $e^{-iH_{\mathrm{targ}}T}$ with $\mathcal O((\Lambda T_c)^2)$ error, and therefore serves as a direct input to Construction~2. Applying Construction~2 to this robust first-order sequence yields a compiled fourth-order sequence. For comparison, we also consider a naive fourth-order sequence in which the four positive-coefficient second-order blocks are implemented as before, while the single negative-coefficient second-order block is used in its direct finite-width form without the robust replacement prescribed by Lemma~\ref{lemma_pulse_impl}.

Fig.~\ref{fig:heisenberg_construction2} compares the first-order sequence, its symmetrized second-order version, the compiled fourth-order sequence from Construction~2, and the naive fourth-order sequence at pulse width $t_p=10^{-4}$. For simplicity, we assumed that negative-time evolution under $H_0$ is directly available. Since $T_c=\frac{J_S}{J}T+8t_p$, the observed scaling versus $T$ directly reflects the predicted scaling in $T_c$ for small fixed $t_p$. As expected, the compiled fourth-order sequence gives the smallest error in the small-$T$ regime. At the same time, the naive fourth-order sequence remains surprisingly competitive, despite not satisfying the robustness condition of Lemma~\ref{lemma_pulse_impl}. This can be understood from the structure of the fourth-order formula itself: it is composed of five second-order blocks, each of which is a time-symmetrized version of an already robust first-order sequence. The underlying first-order sequence cancels the leading finite pulse-width contribution, while the subsequent symmetrization removes the leading second-order term. Consequently, even the naive fourth-order Trotter formula-based sequence inherits substantial robustness, and the additional improvement due to the robust compilation step from Lemma~\ref{lemma_pulse_impl} remains modest in this regime. We also benchmark the corresponding hybrid MPF for this example (see Appendix~\ref{app:numerics_heisenberg_mpf}).

\section{Conclusion}

We have developed a systematic method for designing fast pulse sequences that simulate target dynamics to arbitrarily high-order in the control-cycle time while remaining robust to finite pulse-width effects. The central idea is a correspondence between first-order pulse sequences and first-order Trotter formulas: by mapping a given sequence to a Trotter formula, applying high-order Trotter formulas, and translating the result back to pulses, we obtain explicit pulse sequences for robust, high-order quantum simulation. 

In particular, we present two complementary constructions. Construction~1 begins with a first-order sequence of instantaneous pulses and compiles the resulting higher-order ideal sequence into finite-width implementations using dynamically corrected gates~\cite{khodjasteh2009dynamical,khodjasteh2009dynamically,khodjasteh2010arbitrarily}, so that each control operation is robust to finite pulse-width errors. Construction~2 instead starts from a first-order sequence that is already robust to finite pulse-width effects and promotes it to higher-order while preserving the native error-cancellation structure of the original pulses, without introducing new DCGs. Numerical examples show that both constructions upgrade existing first-order protocols to the predicted higher-order scaling at realistic pulse widths.

An important caveat is that Trotter formulas beyond second-order generally require negative-time evolution under the native Hamiltonian. While such segments can be synthesized via concatenated dynamical decoupling, the resulting pulse overhead may be large, making finite-width effects significant. When only expectation values are required, however, negative-time evolution can be avoided altogether by adapting multi-product formulas~\cite{childs2012hamiltonian} to the pulse-sequence setting, combining outputs from several robust second-order sequences to obtain higher-order accuracy. For both approaches, we derive error bounds tailored to our pulse-sequence setting and validate the predicted behavior through numerical examples.

Looking ahead, the framework extends naturally to other pulse imperfections. In Construction~1, one can substitute DCGs tailored to the relevant errors~\cite{wimperis1994broadband,cummins2003tackling,brown2004arbitrarily}, while in Construction~2 the same robust implementation step applies provided the input sequence already cancels those imperfections. Fast pulses can also suppress errors due to system--environment coupling, as in dynamical decoupling \cite{viola1999dynamical}, suggesting extensions of the framework to simulation protocols robust against environmental noise. The connection to Trotter formulas also suggests incorporating randomized Trotter formulas~\cite{campbell2019random,childs2019faster} to design randomized pulse sequences, which may reduce overhead while maintaining robustness. Another promising direction is to extend these ideas from fast pulses to smooth continuous-time control, potentially informed by geometrically motivated robust-pulse constructions~\cite{zeng2018general,buterakos2021geometrical,barnes2022dynamically}.

\section*{Acknowledgments}
This work is supported by DOE’s Express 2023 Number DE-SC0024685.

\section*{Data Availability}
The data that support the findings of this article are openly available \cite{kim2026robust}.

\appendix

\section{Dynamically corrected gates}\label{app:dcg}
Here we summarize the dynamically corrected gate (DCG) construction used in Sec.~\ref{subsec:construction1}, specialized to the present setting where the only pulse error is the always-on native Hamiltonian $H_0$. We follow Refs.~\cite{khodjasteh2009dynamical,khodjasteh2009dynamically,khodjasteh2010arbitrarily} and use the pulse notation of Eqs.~\eqref{P_tilde_def}, \eqref{magnus_1}, \eqref{eq:stretched_pulse}, and \eqref{eq:stretched_rev_pulse}.

Let $W$ be an ideal pulse generated by a control function $f_W(t)$ supported on $[0,t_p]$. Defining $H_W$ for the corresponding control Hamiltonian, the ideal pulse propagator is
\begin{align}
    U_W(t)=\mathcal{T}\exp\left(-i\int_0^t f_W(t')H_Wdt'\right), \quad
    U_W(t_p)=W.
\end{align}
Its finite-width implementation is
\begin{align}
    \widetilde W := \mathcal{T}\exp\left(-i\int_0^{t_p}\big(H_0+f_W(t)H_W\big) dt \right)
    = W e^{-i\Phi_W},
\end{align}
where $\Phi_W$ is the Magnus expansion. Its first-order term is
\begin{align}
    \Phi_W^{(1)}=\int_0^{t_p}U_W^\dagger(t)H_0U_W(t)dt,
\end{align}
so $\|\Phi_W\|=\mathcal{O}(\Lambda t_p)$.

\subsection{First-order Eulerian DCGs}

Let $\mathcal E$ be a linear space of error operators. A finite subgroup $\mathcal G\subset U(\mathcal H)$ is called a decoupling group for $\mathcal E$ if
\begin{align}
    \frac{1}{|\mathcal G|}\sum_{g\in\mathcal G} g^\dagger E g = 0,
    \qquad \forall E\in\mathcal E.
    \label{eq:dcg_decoupling_condition}
\end{align}
Let $\Gamma=\{h_1,\dots,h_m\}$ be a generating set for $\mathcal G$, and let $\mathrm{Cay}(\mathcal G,\Gamma)$ be the corresponding directed Cayley graph, whose vertices are the elements of $\mathcal G$ and whose directed edges are $g\to gh$. An Eulerian cycle on this graph traverses each directed edge exactly once.

Assume that each generator $h\in\Gamma$ is always implemented by the same finite-width control function,
\begin{align}
    \widetilde h = h e^{-i\Phi_h}.
\end{align}
Let $\Omega_\Gamma$ denote the Magnus expansion of the composite propagator obtained by following one Eulerian cycle on $\mathrm{Cay}(\mathcal G,\Gamma)$ using these pulse implementations. Since each directed edge $g \to gh$ is traversed exactly once, for each generator $h\in\Gamma$ the corresponding first-order error $\Phi_h^{(1)}$ appears once in every group frame $g\in\mathcal G$. Therefore,
\begin{align}
    \Omega_\Gamma^{(1)} = \sum_{h\in\Gamma}\sum_{g\in\mathcal G} g^\dagger \Phi_h^{(1)} g.
    \label{eq:dcg_generator_average}
\end{align}
Hence $\Omega_\Gamma^{(1)}=0$ whenever $\Phi_h^{(1)}\in\mathcal E$ for all $h\in\Gamma$ and $\mathcal G$ satisfies the decoupling condition in Eq.~\eqref{eq:dcg_decoupling_condition}.

\begin{figure*}[t!]
    \centering
    \includegraphics[width=17cm]{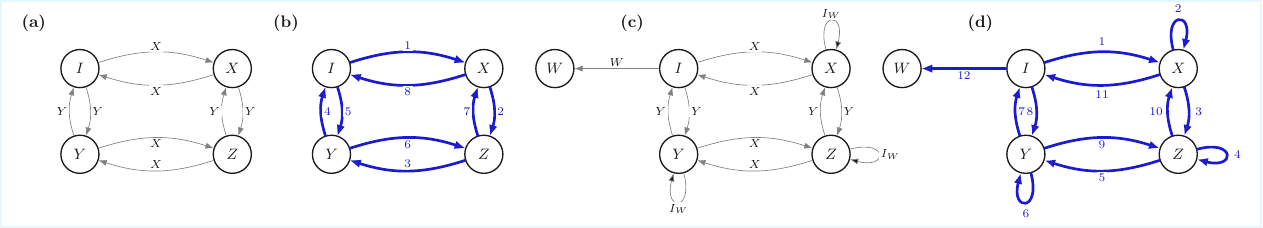}
    \caption{Illustration of the first-order Eulerian DCG construction.  (a) Directed Cayley graph $\mathrm{Cay}(\mathcal G,\Gamma)$ for $\mathcal G=\{I,X,Y,Z\}$ and $\Gamma=\{X,Y\}$.  (b) An Eulerian cycle generating the EDD identity block.  (c) The augmented graph of (a) with $I_W$ self-loops and a target edge from the identity vertex. (d) An Eulerian path on the augmented graph, yielding a first-order dynamically corrected implementation of $W$.}
    \label{fig:placeholder}
\end{figure*}

Eulerian averaging therefore cancels the first-order errors of the generator pulses, but it produces only a robust implementation of the identity: an Eulerian cycle starts and ends at the identity vertex. This identity block is the basic Eulerian dynamical decoupling (EDD) construction \cite{viola2003robust}. Figs.~\ref{fig:placeholder}(a) and \ref{fig:placeholder}(b) illustrate the directed Cayley graph and one corresponding Eulerian cycle for $\mathcal G=\{I,X,Y,Z\}$ and $\Gamma=\{X,Y\}$. 

To implement a nontrivial target gate $W$, we use the reversed and stretched versions of the same control profile. The reversed implementation is
\begin{align}
    \widetilde W^{\mathrm{rev}} :=\mathcal{T}\exp\left(-i\int_0^{t_p}\big(H_0-f_W(t_p-t)H_W\big)dt\right),
\end{align}
which implements $W^\dagger$ in the absence of $H_0$. For $c\ge 1$, the stretched implementation is
\begin{align}
    \widetilde W(ct_p) :=\mathcal{T}\exp\left(-i\int_0^{ct_p}\Big(H_0+\frac1c f_W(t/c)H_W\Big)dt\right),
\end{align}
which implements $W$ in the absence of $H_0$. Using these ingredients, we construct a balance pair, consisting of a target block and an identity block with the same leading error.

For the first-order construction, a convenient choice is
\begin{align}\label{I_W}
    \widetilde I_W := \widetilde W^{\mathrm{rev}}\widetilde W
\end{align}
together with the stretched pulse $\widetilde W(2t_p)$. As in the derivation of Eq.~\eqref{pulse_2},
\begin{align}
    \widetilde I_W = e^{-i2\Phi_W^{(1)}}+\mathcal{O}\big((\Lambda t_p)^2\big).
\end{align}
By Proposition~\ref{prop:uniform_stretching}, the first-order Magnus term of the stretched pulse $\widetilde W(2t_p)$ is $2\Phi_W^{(1)}$. Hence $\widetilde I_W$ and $\widetilde W(2t_p)$ share the same first-order error generator.

The first-order Eulerian DCG is obtained by augmenting $\mathrm{Cay}(\mathcal G,\Gamma)$ as follows: attach a self-loop labeled $\widetilde I_W$ at each non-identity vertex $g\neq I$, and replace the loop at the identity by an outgoing edge labeled $\widetilde W(2t_p)$ leading to a terminal vertex. An Eulerian path on this augmented graph starts at $I$, traverses each original edge and each added self-loop exactly once, and exits through the edge labeled $\widetilde W(2t_p)$. The resulting composite operation implements $W$ ideally. Figs.~\ref{fig:placeholder}(c) and \ref{fig:placeholder}(d) show the corresponding augmented graph and an Eulerian path implementing the first-order DCG construction.

Let $\Omega_{\mathrm{DCG}}$ denote the Magnus expansion of the composite pulse sequences associated with this augmented-graph sequence. The first-order term of $\Omega_{\mathrm{DCG}}$ is
\begin{align}
    \Omega_{\mathrm{DCG}}^{(1)} = \sum_{h\in\Gamma}\sum_{g\in\mathcal G} g^\dagger \Phi_h^{(1)} g + \sum_{g\in\mathcal G} g^\dagger \bigl(2\Phi_W^{(1)}\bigr) g.
    \label{eq:dcg_first_order_error}
\end{align}
If $\Phi_h^{(1)}\in\mathcal E$ for all $h\in\Gamma$ and $2\Phi_W^{(1)}\in\mathcal E$, then Eq.~\eqref{eq:dcg_decoupling_condition} implies $\Omega_{\mathrm{DCG}}^{(1)}=0$. The leading nonvanishing Magnus term is therefore at least second order in $\Lambda\tau_1$, where $\tau_1$ is the total duration of the DCG block. Hence the resulting first-order DCG implementation, denoted $\widetilde W^{[1]}$, satisfies
\begin{align}
    \|\widetilde W^{[1]}-W\|
    = \mathcal{O}\big((\Lambda\tau_1)^2\big).
\end{align}

\subsection{Higher-order concatenated Eulerian DCGs} 
The first-order Eulerian construction extends recursively to arbitrarily high
order. Following Ref.~\cite{khodjasteh2010arbitrarily}, suppose that for some
$q\ge 1$ we have already constructed a $q$th-order implementation of $W$,
denoted $\widetilde W^{[q]}$, with total duration $\tau_q$, such that
\begin{align}
    \widetilde W^{[q]} = W e^{-i\Phi_W^{[q]}},
    \qquad
    \|\Phi_W^{[q]}\| = \mathcal{O}\big((\Lambda\tau_q)^{q+1}\big).
    \label{eq:dcg_q_error}
\end{align}
Likewise, let $\widetilde{W^\dagger}^{[q]}$ denote a $q$th-order
implementation of $W^\dagger$.

Given a $q$th-order implementation $\widetilde W^{[q]}$, we write $\widetilde W^{[q]}(r\tau_q)$ for the uniformly stretched version of its control profile by a factor $r\ge1$, in the same sense as the stretched pulses defined in Sec.~\ref{subsec:control-function-variations}. Let
\begin{align}\label{cdcg-balance-pair-con}
    r_q:=2^{1/(q+1)}.
\end{align}
As in the first-order case, the recursive construction is based on a $q$th-order balance pair: an identity block $\widetilde I_W^{[q]}$ and a target block $\widetilde W_*^{[q]}$ whose residual errors agree up to order $q+1$. One choice is \cite{khodjasteh2010arbitrarily}
\begin{align}
    \widetilde I_W^{[q]} &:= \widetilde{W^\dagger}^{[q]}(\tau_q) \widetilde W^{[q]}\big(r_q\tau_q\big),\label{cDCG_balance_pair_I}
    \\
    \widetilde W_*^{[q]} &:= \widetilde W^{[q]}(\tau_q) \widetilde{W^\dagger}^{[q]}(\tau_q) \widetilde W^{[q]}(\tau_q).
    \label{cDCG_balance_pair_W}
\end{align}
The first block implements the identity, while the second implements $W$, and
their residual errors satisfy
\begin{align}
    \widetilde I_W^{[q]}
    &=I e^{-i\Phi_{I_W}^{[q]}}, \label{cDCG_balance_pair_I_err}
    \\
    \widetilde W_*^{[q]}
    &=W e^{-i\Phi_{W_*}^{[q]}}.\label{cDCG_balance_pair_W_err}
\end{align}
Ref.~\cite{khodjasteh2010arbitrarily} states, without explicit proof, that the residual errors agree up to order $q+1$. For completeness, we verify this matching at the end of this subsection (see Lemma~\ref{lem:dcg_balance_pair_matching}).

One then repeats the Eulerian construction of the previous subsection at the level of these $q$th-order blocks: choose a decoupling group for the error model generated by the leading residual errors of the $q$th-order ingredients, form the corresponding augmented Cayley graph, and define $\widetilde W^{[q+1]}$ by following an Eulerian path on that graph. Exactly as before, each leading error generator is averaged once over every group frame, so all contributions of order $(\Lambda\tau_q)^{q+1}$ cancel. The resulting gate therefore satisfies 
\begin{align} 
\widetilde W^{[q+1]} = W e^{-i\Phi_W^{[q+1]}}, \quad \|\Phi_W^{[q+1]}\| = \mathcal{O}\big((\Lambda\tau_{q+1})^{q+2}\big), 
\end{align} 
where $\tau_{q+1}$ is the total duration of the concatenated block. (Note that a $q$th-order concatenated DCG requires pulse stretching by at most a factor $2^{\sum_{r=1}^{q}1/r}$.) Ref.~\cite{khodjasteh2010arbitrarily} gives a recursive growth law for the total duration under concatenation
\begin{align}
    \tau_{q+1} = \Big[d_qm_q+(d_q-1)\big(1+2^{1/(q+1)}\big)+3\Big]\tau_q,
\end{align}
where $d_q:=|G^{[q]}|$ is the order of the decoupling group at level $q$, and $m_q$ is the number of generators in its generating set. The bracketed factor is the per-level overhead incurred by the augmented Eulerian construction. If these overhead factors are uniformly bounded above by some construction-dependent constant $\kappa>1$, then iterating the recursion gives
\begin{align}
    \tau_q \le \kappa^q t_p .
\end{align}
Combining this with Eq.~\eqref{eq:dcg_q_error} gives 
\begin{align} \label{DCG_error_apdx}
\|\widetilde W^{[q]}-W\| = \mathcal{O}\big((\kappa^q\Lambda t_p)^{q+1}\big), 
\end{align} 
which is the estimate used in Eq.~\eqref{DCG_error}.

It remains to verify the matching property of the balance pair. We first note
a simple stretching identity.

\begin{proposition}\label{prop:uniform_stretching}
Let $\widetilde W(T)=We^{-i\Phi_W(T)}$ be an implementation of $W$ generated by a control Hamiltonian $H_c(t)$ on $[0,T]$. For $r\ge1$, define
\begin{align}
    H_c^{(r)}(t)=r^{-1}H_c(t/r), \qquad t\in[0,rT],
\end{align}
and let $\widetilde W(rT)=We^{-i\Phi_W(rT)}$ be the corresponding stretched implementation. Defining $\Phi_W(s)=\sum_{m\ge1}\Phi_W^{(m)}(s)$ for the Magnus expansion of the error action, for $s = T$ and $s = rT$, one has
\begin{align}
    \Phi_W^{(m)}(rT)=r^m\Phi_W^{(m)}(T), \qquad m\ge1.
\end{align}
\end{proposition}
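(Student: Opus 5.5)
I would prove Proposition~\ref{prop:uniform_stretching} by passing to the control interaction frame and changing variables in every Magnus term. Write $U_c(t)=\mathcal{T}\exp\bigl(-i\int_0^t H_c(t')dt'\bigr)$ for the unstretched control propagator on $[0,T]$, and $U_c^{(r)}(t)$ for the one generated by $H_c^{(r)}$ on $[0,rT]$.

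The first step is to show $U_c^{(r)}(t)=U_c(t/r)$. Both sides satisfy the same Schr\"odinger equation with the same initial condition: differentiating $U_c(t/r)$ gives $-i\,r^{-1}H_c(t/r)U_c(t/r)=-iH_c^{(r)}(t)U_c(t/r)$, and both equal $I$ at $t=0$. By uniqueness they coincide. In particular $U_c^{(r)}(rT)=U_c(T)=W$, so the stretched implementation targets the same ideal gate.

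Next I factor out the ideal gate as in Eq.~\eqref{P_tilde_def}. This gives
\begin{align}
\widetilde W(s)=W\,\mathcal{T}\exp\Bigl(-i\int_0^s \widetilde H_s(t)dt\Bigr),
\end{align}
with $\widetilde H_T(t)=U_c^\dagger(t)H_0U_c(t)$ and $\widetilde H_{rT}(t)=U_c^{(r)\dagger}(t)H_0U_c^{(r)}(t)$. By the first step, $\widetilde H_{rT}(t)=\widetilde H_T(t/r)$. The key observation is that $H_0$ is \emph{not} rescaled by $1/r$; only the control is. Hence the toggling-frame generator is simply time-dilated, with no amplitude factor, and $\Phi_W(s)$ is by definition the Magnus exponent of this toggling-frame propagator.

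Finally I compare the Magnus terms. The $m$th term is an $m$-fold iterated integral over the simplex $0\le t_m\le\cdots\le t_1\le s$ of a fixed linear combination of nested commutators $[\widetilde H_s(t_1),[\cdots,\widetilde H_s(t_m)]]$, with universal coefficients. Substituting $t_j=r t'_j$ maps the simplex of size $rT$ onto the simplex of size $T$ and preserves the time ordering because $r>0$. The integrand becomes the corresponding expression in $\widetilde H_T(t'_j)$, and the Jacobian is $r^m$. Therefore $\Phi_W^{(m)}(rT)=r^m\Phi_W^{(m)}(T)$.

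The only point needing care is that the Magnus series itself is only asymptotic or convergent under conditions such as $\Lambda rT<\pi$. The termwise identity, however, holds unconditionally, since each $\Phi_W^{(m)}$ is defined by its explicit integral formula. I do not expect a real obstacle. The step most worth stating explicitly is the invariance of the time-ordered simplex under the substitution, since $r$ multiplies all times uniformly.
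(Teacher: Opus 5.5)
Your proposal is correct and follows essentially the same route as the paper: show $U_c^{(r)}(t)=U_c(t/r)$ (you via ODE uniqueness, the paper via the substitution $s=t'/r$ in the time-ordered exponential), conclude that the toggling-frame Hamiltonian is only time-dilated, $H_I^{(r)}(t)=H_I(t/r)$, and rescale the simplex with $t_j=rs_j$ in each Magnus term to pick up the factor $r^m$. Your extra remarks are correct: $H_0$ itself is not rescaled, and the identity holds term by term independently of Magnus convergence.
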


\begin{proof}
Let
\begin{align}
    U_c^{(r)}(t) = \mathcal T\exp\left(-i\int_0^t r^{-1}H_c(t'/r)dt'\right)
\end{align}
be the control propagator generated by $H_c^{(r)}(t)$. With the change of variables $s=t'/r$, we obtain
\begin{align}
    U_c^{(r)}(t) =\mathcal T\exp\left(-i\int_0^{t/r} H_c(s)ds\right) =U_c(t/r)
\end{align}
and hence $U_c^{(r)}(rT)=U_c(T)=W$. Therefore the toggling-frame Hamiltonian for the stretched implementation is
\begin{align}
    H_I^{(r)}(t) = \bigl(U_c^{(r)}(t)\bigr)^\dagger H_0 U_c^{(r)}(t) = H_I(t/r).
\end{align}
Now the $m$th Magnus term can be written as
\begin{align}
    &\Phi_W^{(m)}(rT) = \\
    & \int_{0\le t_m\le\cdots\le t_1\le rT}
    F_m\bigl(H_I^{(r)}(t_1),\dots,H_I^{(r)}(t_m)\bigr)
    dt_1\cdots dt_m, \nonumber 
\end{align}
where $F_m$ is the same multilinear nested-commutator expression as in the unstretched case. Using $H_I^{(r)}(t_j)=H_I(t_j/r)$ and changing variables $t_j=rs_j$ for $j=1,\dots,m$, we obtain
\begin{align}
    & \Phi_W^{(m)}(rT)
    = \\
    & r^m \int_{0\le s_m\le\cdots\le s_1\le T} F_m\bigl(H_I(s_1),\dots,H_I(s_m)\bigr)
    ds_1\cdots ds_m \nonumber,
\end{align} 
which is equal to $r^m \Phi_W^{(m)}(T)$.
\end{proof}

We now show that the balance pair defined above is indeed matched up to order $q+1$.

\begin{lemma}\label{lem:dcg_balance_pair_matching}
The balance pair defined in Eqs.~\eqref{cDCG_balance_pair_I} and~\eqref{cDCG_balance_pair_W} satisfies
\begin{align}
    \Phi_{I_W}^{[q]}-\Phi_{W_*}^{[q]} = \mathcal{O}\big((\Lambda\tau_q)^{q+2}\big).
\end{align}
\end{lemma}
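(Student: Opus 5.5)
The plan is to expand every factor of the two blocks to leading order in its Magnus series, move all ideal gates to the left, and compare the leading error generators.

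First I will record the structure of a $q$th-order implementation. Write $\widetilde W^{[q]}(\tau_q)=We^{-i\Phi}$ and $\widetilde{W^\dagger}^{[q]}(\tau_q)=W^\dagger e^{-i\Psi}$. The error actions are Magnus series $\Phi=\sum_{m}\Phi^{(m)}$ and $\Psi=\sum_m\Psi^{(m)}$, whose $m$th term is homogeneous of degree $m$ in $H_0$ and in time. By Eq.~\eqref{eq:dcg_q_error}, the terms with $m\le q$ vanish. Hence
\begin{align}
\Phi=\Phi^{(q+1)}+\mathcal{O}\big((\Lambda\tau_q)^{q+2}\big),\qquad \Psi=\Psi^{(q+1)}+\mathcal{O}\big((\Lambda\tau_q)^{q+2}\big),
\end{align}
with $\|\Phi^{(q+1)}\|,\|\Psi^{(q+1)}\|=\mathcal{O}\big((\Lambda\tau_q)^{q+1}\big)$.

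Next I apply Proposition~\ref{prop:uniform_stretching} to the stretched block $\widetilde W^{[q]}(r_q\tau_q)$. Each Magnus order scales as $r_q^m$, and $r_q^{q+1}=2$. Since $r_q\le 2$, the tail stays $\mathcal{O}\big((\Lambda\tau_q)^{q+2}\big)$. This gives $\widetilde W^{[q]}(r_q\tau_q)=We^{-i\Phi'}$ with $\Phi'=2\Phi^{(q+1)}+\mathcal{O}\big((\Lambda\tau_q)^{q+2}\big)$.

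Now I compute the two blocks, using $Ve^{-iX}V^\dagger=e^{-iVXV^\dagger}$ to move the ideal gates to the left. For the identity block,
\begin{align}
\widetilde I_W^{[q]}=W^\dagger e^{-i\Psi}We^{-i\Phi'}=e^{-iW^\dagger\Psi W}e^{-i\Phi'}.
\end{align}
For the target block,
\begin{align}
\widetilde W_*^{[q]}=We^{-i\Phi}W^\dagger e^{-i\Psi}We^{-i\Phi}=W\,e^{-i\Phi}e^{-iW^\dagger\Psi W}e^{-i\Phi}.
\end{align}
By the Baker--Campbell--Hausdorff formula, a product of exponentials whose exponents are each $\mathcal{O}\big((\Lambda\tau_q)^{q+1}\big)$ equals the exponential of their sum plus commutator corrections of order $\mathcal{O}\big((\Lambda\tau_q)^{2q+2}\big)$. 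Since $2q+2\ge q+2$, these corrections are negligible. The higher Magnus tails also enter only at $\mathcal{O}\big((\Lambda\tau_q)^{q+2}\big)$. Therefore both blocks have the same leading error:
\begin{align}
\Phi_{I_W}^{[q]}&=W^\dagger\Psi^{(q+1)}W+2\Phi^{(q+1)}+\mathcal{O}\big((\Lambda\tau_q)^{q+2}\big),\\
\Phi_{W_*}^{[q]}&=W^\dagger\Psi^{(q+1)}W+2\Phi^{(q+1)}+\mathcal{O}\big((\Lambda\tau_q)^{q+2}\big).
\end{align}
Subtracting the two lines gives the lemma.

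The main obstacle is the first step. Eq.~\eqref{eq:dcg_q_error} is stated only as a norm bound, but the argument needs the sharper statement that $\Phi_W^{[q]}$ equals a degree-$(q+1)$ homogeneous term plus a higher-order tail. I would justify this inductively: the $q$th-order block is built from finite-width pulses by construction, so its total Magnus expansion is a series of homogeneous terms in $H_0$ of increasing degree, and the concatenation step cancels every term of degree at most $q$ exactly. That exact cancellation is what makes Proposition~\ref{prop:uniform_stretching} produce precisely the factor $r_q^{q+1}=2$ at leading order. A related point to check is that the stretched block is stretched uniformly as a whole composite control profile, which is how $\widetilde W^{[q]}(r\tau_q)$ was defined, so the Proposition applies to it.
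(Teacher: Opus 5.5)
Your proposal is correct and follows essentially the same route as the paper: isolate the leading degree-$(q+1)$ term of each block's error action, use Proposition~\ref{prop:uniform_stretching} with $r_q^{q+1}=2$ for the stretched factor, and combine via first-order BCH after conjugating through the ideal gates, so that both blocks carry the leading error $2\Phi^{(q+1)}+W^\dagger\Psi^{(q+1)}W$. Your remark that the norm bound in Eq.~\eqref{eq:dcg_q_error} must be read as exact vanishing of the Magnus terms of degree at most $q$ (justified inductively by the exact cancellation in the Eulerian averaging) makes explicit an assumption the paper uses tacitly when it writes $\Phi_W^{[q]}(\tau_q)=A_W^{[q]}\tau_q^{q+1}+\mathcal{O}\big((\Lambda\tau_q)^{q+2}\big)$.
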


\begin{proof}
Since $\widetilde W^{[q]}$ and $\widetilde{W^\dagger}^{[q]}$ are $q$th-order implementations, their error actions have the expansions
\begin{align}
    \Phi_W^{[q]}(\tau_q) &= A_W^{[q]}\tau_q^{q+1} + \mathcal{O}\big((\Lambda\tau_q)^{q+2}\big), \\
    \Phi_{W^\dagger}^{[q]}(\tau_q) &= A_{W^\dagger}^{[q]}\tau_q^{q+1}  + \mathcal{O}\big((\Lambda\tau_q)^{q+2}\big).
\end{align}
By Proposition~\ref{prop:uniform_stretching},
\begin{align}
    \Phi_W^{[q]}(r_q\tau_q) = r_q^{q+1}A_W^{[q]}\tau_q^{q+1}  + \mathcal{O}\big((\Lambda\tau_q)^{q+2}\big).
\end{align}
For $\widetilde A=Ae^{-i\Phi_A}$ and $\widetilde B=Be^{-i\Phi_B}$, let $\Phi_{AB}$ denote the error action of the composite implementation, defined by $\widetilde A\widetilde B = AB e^{-i\Phi_{AB}}$. Then the Baker--Campbell--Hausdorff (BCH) formula gives
\begin{align}
    \Phi_{AB} = \Phi_B + B^\dagger \Phi_A B + \mathcal{O}\big(\|\Phi_A\|\|\Phi_B\|\big).
\end{align}
Here $\Phi_A,\Phi_B=\mathcal{O}\big((\Lambda\tau_q)^{q+1}\big)$, so the BCH remainder is $\mathcal{O}\big((\Lambda\tau_q)^{2q+2}\big) = \mathcal{O}\big((\Lambda\tau_q)^{q+2}\big)$.

Applying this to the two elements of the balance pair yields
\begin{align}
    \Phi_{I_W}^{[q]}
    &= \Big(r_q^{q+1}A_W^{[q]} + W^\dagger A_{W^\dagger}^{[q]}W \Big)\tau_q^{q+1} + \mathcal{O}\big((\Lambda\tau_q)^{q+2}\big), \\
    \Phi_{W_*}^{[q]} &=
    \Big(2A_W^{[q]} + W^\dagger A_{W^\dagger}^{[q]}W \Big)\tau_q^{q+1} + \mathcal{O}\big((\Lambda\tau_q)^{q+2}\big).
\end{align}
Since $r_q^{q+1}=2$ from Eq.~\eqref{cdcg-balance-pair-con}, the order-$(q+1)$ terms coincide, and therefore
\begin{align}
    \Phi_{I_W}^{[q]}-\Phi_{W_*}^{[q]}
    = \mathcal{O}\big((\Lambda\tau_q)^{q+2}\big).
\end{align}
\end{proof}

\section{Proof of Lemma~\ref{lemma_pulse_impl}}
\label{app:proof_lemma_pulse_impl}

\begin{proof}[Proof of Lemma~\ref{lemma_pulse_impl}]
By Proposition~\ref{prop:uniform_stretching}, it suffices to prove the four identities for $\beta=1$, since the general case then follows by replacing $t_p$ with $\beta t_p$.

\paragraph{Proof of Eq.~\eqref{pulse_1}.}
Using $\widetilde{P}_k=P_k e^{-i\Phi_k}$ together with
$\Phi_k=\Phi_k^{(1)}+\mathcal{O}\big((\Lambda t_p)^2\big)$, we obtain
\begin{align}
\|P_k e^{-i\Phi_k^{(1)}}-\widetilde{P}_k\| = \mathcal{O}\big((\Lambda t_p)^2\big).
\end{align}

\paragraph{Proof of Eq.~\eqref{pulse_2}.}
Consider the composite operation $\widetilde{P}_k^{\mathrm{rev}}\widetilde{P}_k$, generated by the control function
\begin{align}
f'_k(t)=
\begin{cases}
f_k(t), & t\in[0,t_p],\\
-f_k(2t_p-t), & t\in(t_p,2t_p].
\end{cases}
\end{align}
Since $\int_0^{2t_p} f'_k(t)dt=0$, the corresponding ideal pulse action is trivial, and thus
\begin{align}
\widetilde{P}_k^{\mathrm{rev}}\widetilde{P}_k=e^{-i\Phi'_k}
\end{align}
in the interaction picture for some Magnus expansion $\Phi'_k$. The associated control propagator satisfies
\begin{align}
U_c'(t)=U_{P_k}(2t_p-t), \qquad t\in(t_p,2t_p],
\end{align}
so $U_c'(t)=U_c'(2t_p-t)$ on $[0,2t_p]$. Hence
\begin{align}
{\Phi'_k}^{(1)}
&= \int_0^{2t_p} {U_c'}^\dagger(t) H_0 U_c'(t)dt \\
&= 2\int_0^{t_p} U_{P_k}^\dagger(t) H_0 U_{P_k}(t)dt = 2\Phi_k^{(1)}.
\end{align}
Therefore,
\begin{align}
\|\widetilde{P}_k^{\mathrm{rev}}\widetilde{P}_k-e^{-i2\Phi_k^{(1)}}\| = \mathcal{O}\big((\Lambda t_p)^2\big).
\end{align}
Right-multiplying by $\widetilde{P}_k^\dagger=e^{i\Phi_k}P_k^\dagger$ and using $\Phi_k=\Phi_k^{(1)}+\mathcal{O}\big((\Lambda t_p)^2\big)$ gives
\begin{align}
\|e^{-i\Phi_k^{(1)}}P_k^\dagger-\widetilde{P}_k^{\mathrm{rev}}\| = \mathcal{O}\big((\Lambda t_p)^2\big),
\end{align}
which proves Eq.~\eqref{pulse_2}.

\paragraph{Proof of Eq.~\eqref{pulse_3}.}
By Eq.~\eqref{pulse_width_correction},
\begin{align}
\sum_{k=0}^{l-1} g_k^\dagger \Phi_{k+1}^{(1)} g_k = 0.
\end{align}
Applying the first-order Trotter formula to the vanishing sum gives
\begin{align}\label{full_cycle_id_appendix}
\left\|I-\prod_{k=1}^{l}P_k e^{-i\Phi_k^{(1)}}\right\| = \mathcal{O}\big((l\Lambda t_p)^2\big).
\end{align}
We now isolate the $k$th factor. Since the operator norm is unitarily invariant, and both $P_j$ and $e^{-i\Phi_j^{(1)}}$ are unitary (all Magnus terms are Hermitian \cite{magnus1954on}), the bound is preserved under the following left and right multiplications. Left-multiplying by $e^{i\Phi_j^{(1)}}P_j^\dagger$ for $j=l,l-1,\dotsc,k$ and right-multiplying by $P_j e^{-i\Phi_j^{(1)}}$ for $j=l,l-1,\dotsc,k+1$ yields
\begin{align}
\left\|e^{i\Phi_k^{(1)}}P_k^\dagger - \prod_{j=1}^{k-1}P_j e^{-i\Phi_j^{(1)}} \prod_{j=k+1}^{l}P_j e^{-i\Phi_j^{(1)}}\right\| = \mathcal{O}\big((l\Lambda t_p)^2\big).
\end{align}
Substituting Eq.~\eqref{pulse_1} into the remaining factors gives
\begin{align}
e^{i\Phi_k^{(1)}}P_k^\dagger = \prod_{j=1}^{k-1}\widetilde{P}_j(t_p) \prod_{j=k+1}^{l}\widetilde{P}_j(t_p) + \mathcal{O}\big((l\Lambda t_p)^2\big),
\end{align}
which is Eq.~\eqref{pulse_3} for $\beta=1$.

\paragraph{Proof of Eq.~\eqref{pulse_4}.}
Applying the first-order Trotter formula to
\begin{align}
\sum_{k=0}^{l-1} g_k^\dagger (-\Phi_{k+1}^{(1)}) g_k = 0
\end{align}
gives
\begin{align}
\left\|I-\prod_{k=1}^{l}P_k e^{i\Phi_k^{(1)}}\right\| = \mathcal{O}\big((l\Lambda t_p)^2\big).
\end{align}
Using the same isolation procedure as above, we obtain
\begin{align}
\left\|P_k e^{i\Phi_k^{(1)}} - \prod_{j=l}^{k+1}e^{-i\Phi_j^{(1)}}P_j^\dagger \prod_{j=k-1}^{1}e^{-i\Phi_j^{(1)}}P_j^\dagger\right\| = \mathcal{O}\big((l\Lambda t_p)^2\big).
\end{align}
Substituting Eq.~\eqref{pulse_2} for the remaining factors yields
\begin{align}
P_k e^{i\Phi_k^{(1)}} = \prod_{j=l}^{k+1}\widetilde{P}_j^{\mathrm{rev}}(t_p) \prod_{j=k-1}^{1}\widetilde{P}_j^{\mathrm{rev}}(t_p) + \mathcal{O}\big((l\Lambda t_p)^2\big),
\end{align}
which is Eq.~\eqref{pulse_4} for $\beta=1$.
\end{proof}

\section{Implementation of negative-time evolution via concatenated DD}
\label{appendix_cdd}

We summarize the robust implementation of negative-time evolution described in Sec.~\ref{sec:negative_time}. As noted in Eqs.~\eqref{eq:neg_dd_identity} and~\eqref{eq:neg_dd_block}, the problem reduces to constructing a dynamical decoupling sequence that approximates the identity while suppressing the evolution generated by $H_0$. Such a sequence immediately yields an approximation of $e^{+iH_0\tau}$ with the same error.

\subsection{Symmetrized CDD with DCGs}\label{subsec:appendix_neg_time_dcg}

We first consider a finite-width robust version of concatenated dynamical decoupling (CDD). Here a first-order DD cycle is a pulse sequence that averages the native Hamiltonian $H_0$ to zero to leading order, so that the resulting evolution approximates the identity \cite{viola1999dynamical}. Concatenation recursively nests such cycles to achieve higher-order suppression \cite{khodjasteh2005fault}, and time symmetrization removes the even-order Magnus terms \cite{iserles2001time}.

Let $U_{\rm sym}^{(0)}(\tau):=e^{-iH_0\tau}$, and for $k\ge 1$ define recursively
\begin{align}
    U_{\rm sym}^{(k)}(\tau) := \prod_{j=1}^{l} P_j U_{\rm sym}^{(k-1)}(\tau)
    \prod_{j=l}^{1} U_{\rm sym}^{(k-1)}(\tau) P_j^\dagger.
    \label{eq:sym_cdd_recursive_app}
\end{align}
Thus the level-$k$ block is obtained by replacing each free-evolution segment in the time-symmetrized base cycle by a copy of the level-$(k-1)$ block.

Let $g_j:=P_j\cdots P_1$ for $j=1,\dots,l$, with $g_0:=I$, and for simplicity assume that $\{g_j\}_{j=0}^{l-1}$ forms a unitary $1$-design, i.e.,
\begin{align}
    \frac1l \sum_{j=0}^{l-1} g_j^\dagger A g_j = 0
\end{align}
for every traceless operator $A$. To analyze the recursion, write
\begin{align}
    U_{\rm sym}^{(k-1)}(\tau)=e^{-iE_{k-1}(\tau)},
\end{align}
where $E_{k-1}(\tau)$ is the residual error action of the level-$(k-1)$ block. Then Eq.~\eqref{eq:sym_cdd_recursive_app} is a palindromic product of $2l$ segments generated by the toggled operators $g_j^\dagger E_{k-1}(\tau) g_j$. Since $E_{k-1}(\tau)$ is traceless, the $1$-design property implies that the first Magnus term of the outer cycle vanishes. Moreover, the palindromic structure implies that the toggling-frame Hamiltonian for the level-$k$ cycle is time-symmetric, so all even-order Magnus terms vanish. Therefore the leading error contribution is 
\begin{align}
    \|E_k(\tau)\| = \mathcal{O}\big(l^3\|E_{k-1}(\tau)\|^3\big),
    \label{eq:sym_cdd_error_recursion_app}
\end{align}
assuming Magnus convergence for each level.

Since $\|E_0(\tau)\|=\Lambda\tau$, iterating Eq.~\eqref{eq:sym_cdd_error_recursion_app} yields
\begin{align}
    \|U_{\rm sym}^{(k)}(\tau)-I\| = \mathcal{O}\big(l^{\frac{3^{k+1}-3}{2}}(\Lambda\tau)^{3^k}\big).
    \label{eq:ideal_sym_cdd_bound_app}
\end{align}
We now make this construction robust to finite pulse widths by replacing each ideal pulse $W\in\{P_j,P_j^\dagger\}$ by a $q$th-order DCG implementation $\widetilde W^{[q]}$, discussed in Appendix~\ref{app:dcg}. As in Eq.~\eqref{DCG_error_apdx}, these satisfy
\begin{align}
    \|\widetilde W^{[q]}-W\| = \mathcal{O}\big((\kappa^q\Lambda t_p)^{q+1}\big).
    \label{eq:neg_time_dcg_single}
\end{align}
Let $\widetilde U_{\rm sym}^{(k)}(\tau)$ denote the sequence obtained from $U_{\rm sym}^{(k)}(\tau)$ by replacing every ideal pulse by its DCG implementation. Since the number of pulse occurrences in the level-$k$ block scales as $\mathcal O((2l)^k)$, a telescoping expansion gives
\begin{align}
\begin{aligned}
    \|\widetilde U_{\rm sym}^{(k)}(\tau)-I\|
    &= \mathcal{O}\big(l^{\frac{3^{k+1}-3}{2}}(\Lambda\tau)^{3^k}\big) \\
    &\quad+\mathcal{O}\big((2l)^k(\kappa^q\Lambda t_p)^{q+1}\big).
\end{aligned}
\end{align}
For a generic $n$-qubit Hamiltonian $H_0$, a DD cycle typically has size exponential in $n$. For example, the $n$-qubit Pauli group gives $l=4^n$. Thus increasing the concatenation level improves the ideal suppression only up to the point where the accumulated DCG implementation error becomes comparable. 

\subsection{Symmetrized concatenated Eulerian DD} \label{subsec:appendix_neg_time_eulerian}

An alternative robust implementation is to build the finite-width robustness directly into the decoupling cycle, rather than into each pulse separately. This is achieved by starting from an Eulerian dynamical decoupling cycle \cite{viola2003robust}, which cancels both the free-evolution contribution and the leading pulse-width error to first order. As reviewed in Appendix~\ref{app:dcg}, EDD realizes the decoupling group by following an Eulerian cycle on its Cayley graph. Eq.~\eqref{eq:dcg_generator_average} then shows that the first Magnus term vanishes whenever Eq.~\eqref{eq:dcg_decoupling_condition} is satisfied.

Let
\begin{align}
    U_{\rm EDD}^{(1)}(\tau) := \prod_{j=1}^{l_{\rm EDD}} \widetilde P_j e^{-iH_0\tau}
\end{align}
be a first-order robust EDD cycle of length $l_{\rm EDD}$, and let
\begin{align}
    U_{\rm sym,EDD}^{(1)}(\tau) := \prod_{j=1}^{l_{\rm EDD}} \widetilde P_j e^{-iH_0\tau} \prod_{j=l_{\rm EDD}}^{1} e^{-iH_0\tau}\widetilde P_j^{\rm rev}
    \label{eq:sym_edd_base_app}
\end{align}
be its time-symmetrized version, where $\widetilde P_j^{\rm rev}$ denotes the reversed implementation of $\widetilde P_j$ defined in Sec.~\ref{subsec:control-function-variations}. Since the Eulerian cycle cancels the first Magnus term, including both the free-evolution contribution from $H_0$ and the leading pulse-width error, it remains to show that symmetrization removes the second Magnus term. Let $T_{\rm sym}:=2l_{\rm EDD}(\tau+t_p)$, and let $H_c(t)$ and $U_c(t)$ denote the control Hamiltonian and control propagator of the symmetrized cycle. By construction of the reversed pulses, $H_c(T_{\rm sym}-t)=-H_c(t)$. Since the cycle is cyclic, $U_c(T_{\rm sym})=I$. Hence $U_c(T_{\rm sym}-t)$ satisfies the same Schr\"odinger equation and initial condition as $U_c(t)$, so $U_c(T_{\rm sym}-t)=U_c(t)$. Therefore the corresponding toggling-frame Hamiltonian $H_I(t):=U_c^\dagger(t)H_0U_c(t)$ is time-symmetric, $H_I(T_{\rm sym}-t)=H_I(t)$, and thus all even-order Magnus terms vanish \cite{iserles2001time}. Hence,
\begin{align}
    \|U_{\rm sym,EDD}^{(1)}(\tau)-I\| = \mathcal O\big(l_{\rm EDD}^3[\Lambda(\tau+t_p)]^3\big).
    \label{eq:sym_edd_base_bound_app}
\end{align}
Let $U_{\rm sym,EDD}^{(0)}(\tau):=e^{-iH_0\tau}$ and for $k\ge 1$ define recursively
\begin{align}
    U_{\rm sym,EDD}^{(k)}(\tau):= \prod_{j=1}^{l_{\rm EDD}} \widetilde P_j U_{\rm sym,EDD}^{(k-1)}(\tau) \prod_{j=l_{\rm EDD}}^{1} U_{\rm sym,EDD}^{(k-1)}(\tau)\widetilde P_j^{\rm rev}.
    \label{eq:sym_edd_recursive_app}
\end{align}
Write
\begin{align}
    U_{\rm sym,EDD}^{(k-1)}(\tau)=e^{-iE_{k-1}(\tau)},
\end{align}
where $E_{k-1}(\tau)$ is the residual error action of the level-$(k-1)$ block. Assume that, for every $k$, the effective error action $E_{k-1}(\tau)$ belongs to a traceless operator space $\mathcal E$ that is averaged to zero by the base Eulerian cycle, that each leading pulse-width error generator also lies in $\mathcal E$, and that $\mathcal E$ is preserved under the recursive construction.

Under these assumptions, the first Magnus term of the outer cycle again vanishes by Eulerian averaging, while time symmetry again removes all even-order Magnus terms. Hence,
\begin{align}
    \|E_k(\tau)\|  = \mathcal O\big(l_{\rm EDD}^3(\|E_{k-1}(\tau)\|+\Lambda t_p)^3\big),
    \label{eq:sym_edd_error_recursion_app}
\end{align}
assuming Magnus convergence at each level. 

Under the additional smallness condition
\begin{align}
    l_{\rm EDD}^{3/2}\Lambda t_p \lesssim 1,
\end{align}
the recursion~\eqref{eq:sym_edd_error_recursion_app} can be iterated explicitly. Indeed, defining $x_k:=\|E_k(\tau)\|$ and $a:=\Lambda t_p$, Eq.~\eqref{eq:sym_edd_error_recursion_app} implies
\begin{align}
    x_k \le C l_{\rm EDD}^3 (x_{k-1}+a)^3 \le C' l_{\rm EDD}^3 (x_{k-1}^3 + a^3)
\end{align}
for suitable constants $C,C'>0$, where we used $(u+v)^3 \le 4(u^3+v^3)$. An
induction on $k$ then yields
\begin{align}
    \|U_{\rm sym,EDD}^{(k)}(\tau)-I\| = \mathcal O\Big(l_{\rm EDD}^{\frac{3^{k+1}-3}{2}}(\Lambda\tau)^{3^k} + l_{\rm EDD}^3(\Lambda t_p)^3 \Big).
    \label{eq:sym_edd_identity_bound_app}
\end{align}
Since $l_{\rm EDD}$ is at least the pulse count of the underlying first-order ideal sequence, it also generally grows exponentially with $n$, and so does the above bound. As in the CDD-based approach in the previous subsection, the $t_p$-independent error term achieves high-order suppression under concatenation. Unlike the CDD-based construction with DCGs, however, the $t_p$-dependent error term here is independent of $k$, although it is limited to cubic order in $t_p$.

\section{Error bounds for the multi-product formula implementations}
\label{app:mpf_bounds}
In this appendix, we combine Eq.~\eqref{eq:mpf_expectation_error} with the robust second-order blocks obtained from Constructions~1 and~2. For pairwise distinct integers $m_1,\dots,m_p\in\mathbb N$ and coefficients $b_1,\dots,b_p\in\mathbb R$ satisfying Eq.~\eqref{eq:mpf_condition}, let
\begin{align}
    U_j(T):=\bigl[\mathcal S_2(T/m_j)\bigr]^{m_j},
\end{align}
and let $\widetilde U_j(T)$ denote the corresponding implementation using finite-width pulses. Define
\begin{align}
    \widetilde\rho_j(T):=\widetilde U_j(T)\rho_0\widetilde U_j^\dagger(T),
\end{align}
and the hybrid MPF estimation error
\begin{align}
    \varepsilon_{\mathrm{MPF}}(O,T) := \left| \Tr\bigl(O\rho(T)\bigr) - \sum_{j=1}^p b_j \Tr\bigl(O\widetilde\rho_j(T)\bigr)\right|.
\end{align}
Then Eq.~\eqref{eq:mpf_expectation_error}, together with $\|\widetilde U_j\rho_0\widetilde U_j^\dagger-U_j\rho_0U_j^\dagger\|_1 \le 2\|\widetilde U_j-U_j\|$, implies that for any observable $O$ with $\|O\|\le 1$,
\begin{align}
    \varepsilon_{\mathrm{MPF}}(O,T) \le \mathcal O\bigl((\Gamma T)^{2p+1}\bigr) + 2\sum_{j=1}^p |b_j|\|\widetilde U_j(T)-U_j(T)\|.
    \label{eq:mpf_master_bound}
\end{align}

We first specialize the hybrid MPF error bound in Eq.~\eqref{eq:mpf_master_bound} to the robust second-order block obtained from Construction~1.

\begin{proposition} \label{prop:mpf_construction1}
For the second-order block obtained from Construction~1,
\begin{align}
    \varepsilon_{\mathrm{MPF}}(O,T) = \mathcal O\bigl((\Lambda T_c)^{2p+1}\bigr) + \mathcal O\left(l\varepsilon_{\mathrm{DCG}}^{(q)} \sum_{j=1}^p |b_j|m_j \right),
\end{align}
where $\varepsilon_{\mathrm{DCG}}^{(q)} = \mathcal O\bigl((\kappa^q\Lambda t_p)^{q+1}\bigr)$.
\end{proposition}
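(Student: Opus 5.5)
The plan is to start from the master bound in Eq.~\eqref{eq:mpf_master_bound} and control its two terms separately, exactly as in the proof of Theorem~\ref{thm_algo1}.

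\emph{Ideal MPF term.} For Construction~1 the decomposition in Eq.~\eqref{exponent_algo1} gives $\|A_k\|=\Lambda\tau_k/T$. Hence
\begin{align}
\Gamma T=\sum_{k=1}^l\Lambda\tau_k=\Lambda T_c .
\end{align}
Substituting this into the $\mathcal O\bigl((\Gamma T)^{2p+1}\bigr)$ term immediately gives $\mathcal O\bigl((\Lambda T_c)^{2p+1}\bigr)$. The implicit constant depends on the MPF coefficients $b_j,m_j$; these are fixed once $p$ is fixed, so they can be absorbed into the $\mathcal O$.

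\emph{Implementation term.} Next I bound $\|\widetilde U_j(T)-U_j(T)\|$ for each $j$. The unitary $U_j(T)=[\mathcal S_2(T/m_j)]^{m_j}$ consists of $m_j$ copies of the second-order block. By Eq.~\eqref{S2_pulse} (with $\alpha=1/m_j>0$), each copy is a palindromic sequence of $2l$ ideal pulses drawn from $\{P_k,P_k^\dagger\}$, interleaved with positive-time free evolutions $e^{-iH_0\tau_k/(2m_j)}$. The free evolutions are implemented exactly. Every ideal pulse $W$ is replaced by its DCG $\widetilde W^{[q]}$, which satisfies Eq.~\eqref{DCG_error}.

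I then apply the standard telescoping argument. Write the difference of two products of $N=2lm_j$ unitaries as a sum of $N$ terms, each containing one factor difference $\widetilde W^{[q]}-W$ sandwiched between unitaries. By unitary invariance of the operator norm this gives
\begin{align}
\|\widetilde U_j-U_j\|\le 2lm_j\,\varepsilon_{\mathrm{DCG}}^{(q)} .
\end{align}

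\emph{Combining the terms.} Summing with weights $2|b_j|$ yields
\begin{align}
\mathcal O\Bigl(l\varepsilon_{\mathrm{DCG}}^{(q)}\sum_j|b_j|m_j\Bigr),
\end{align}
which together with the ideal MPF term completes the bound.

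\emph{Main obstacle.} The only delicate point is the telescoping step. It requires that every factor in both products is unitary, which holds for the DCG blocks and the free evolutions. It also requires that the pulse count of each copy is exactly $2l$, independent of $m_j$, so that no dependence on $T/m_j$ enters the DCG error. Both facts hold because the DCG error bound in Eq.~\eqref{DCG_error} depends only on $t_p$, not on the interpulse delays. I would check this carefully, but I expect no genuine difficulty.
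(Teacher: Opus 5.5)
Your proposal is correct and follows essentially the same route as the paper. It bounds $\Gamma T$ by $\Lambda T_c$ via $\|A_k\|=\Lambda\tau_k/T$ (the paper writes $\le$; your equality needs $\tau_k\ge 0$). It then counts $2lm_j$ DCG-replaced pulses in $U_j(T)$, telescopes to get $\|\widetilde U_j(T)-U_j(T)\|=\mathcal O\bigl(lm_j\varepsilon_{\mathrm{DCG}}^{(q)}\bigr)$, and substitutes both bounds into Eq.~\eqref{eq:mpf_master_bound}.
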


\begin{proof}
By Eq.~\eqref{exponent_algo1},
\begin{align}
    \Gamma T = \sum_{k=1}^l \|A_k\|T \le \Lambda \sum_{k=1}^l \tau_k = \Lambda T_c.
\end{align}
Moreover, each block $\mathcal S_2(T/m_j)$ contains $2l$ ideal pulses, so $U_j(T)=[\mathcal S_2(T/m_j)]^{m_j}$ contains $2lm_j$ such pulses. Replacing each by a $q$th-order DCG with error $\mathcal O(\varepsilon_{\mathrm{DCG}}^{(q)})$ and using a telescoping argument gives
\begin{align}
    \|\widetilde U_j(T)-U_j(T)\| = \mathcal O\bigl(lm_j\varepsilon_{\mathrm{DCG}}^{(q)}\bigr).
\end{align}
Substituting these bounds into Eq.~\eqref{eq:mpf_master_bound} proves the claim.
\end{proof}

We next specialize Eq.~\eqref{eq:mpf_master_bound} to the robust second-order block from Construction~2.

\begin{proposition}
\label{prop:mpf_construction2}
For the second-order block obtained from Construction~2, choose the stretching parameter $c=2m_{\max}$, where $m_{\max}:=\max_{1\le j\le p}m_j$. Then
\begin{align}
\begin{aligned}
    \varepsilon_{\mathrm{MPF}}(O,T) & = \mathcal O\left(
    \bigl(\Lambda(T_c+(2m_{\max}-1)lt_p)\bigr)^{2p+1}
    \right)
    \\
    & \qquad+ \mathcal O\left( lm_{\max}^2(\Lambda t_p)^2 \sum_{j=1}^p \frac{|b_j|}{m_j} \right).
\end{aligned}
\end{align}
\end{proposition}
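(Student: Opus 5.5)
We follow the structure of the proof of Proposition~\ref{prop:mpf_construction1} and start from the master bound Eq.~\eqref{eq:mpf_master_bound}. Two quantities must be estimated: the ideal MPF error $(\Gamma T)^{2p+1}$ and the implementation errors $\|\widetilde U_j(T)-U_j(T)\|$.

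\textbf{Ideal MPF error.} We use the decomposition Eq.~\eqref{exponent_algo2} with the free parameter set to $c=2m_{\max}$. Since $\|\Phi_k^{(1)}\|\le\Lambda t_p$, the same computation as in the proof of Theorem~\ref{thm_algo2} gives
\begin{align}
\Gamma T\le \Lambda\sum_{k=1}^l\bigl(2m_{\max}t_p+\tau_k\bigr)=\Lambda\bigl(T_c+(2m_{\max}-1)lt_p\bigr),
\end{align}
where we used $T_c=\sum_k\tau_k+lt_p$. This produces the first term of the claim.

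\textbf{Implementation error.} In $U_j(T)=[\mathcal S_2(T/m_j)]^{m_j}$, each block $\mathcal S_2(T/m_j)$ is Eq.~\eqref{S2_algo2} with $\alpha=1/m_j>0$. All coefficients are positive, so only the pulse factors \eqref{pulse_1} and \eqref{pulse_2} of Lemma~\ref{lemma_pulse_impl} are needed, with
\begin{align}
\beta_j=\frac{c\alpha}{2}=\frac{m_{\max}}{m_j}\ge 1.
\end{align}
This choice of $c$ is what makes every factor a genuine stretch and never a compression; indeed, that requirement is the reason for the condition $c=2m_{\max}$. Each factor is then implemented by $\widetilde P_k(\beta_jt_p)$ or $\widetilde P_k^{\mathrm{rev}}(\beta_jt_p)$ with error $\mathcal O\bigl((\Lambda\beta_jt_p)^2\bigr)$. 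Because \eqref{pulse_3} and \eqref{pulse_4} are not used, no extra factor of $l$ enters the per-pulse error. Each block has $2l$ such factors, and $U_j$ has $2lm_j$ of them. A telescoping argument, using unitarity of every factor, therefore gives
\begin{align}
\|\widetilde U_j(T)-U_j(T)\|=\mathcal O\Bigl(lm_j\frac{m_{\max}^2}{m_j^2}(\Lambda t_p)^2\Bigr)=\mathcal O\Bigl(\frac{lm_{\max}^2}{m_j}(\Lambda t_p)^2\Bigr).
\end{align}
Substituting this into Eq.~\eqref{eq:mpf_master_bound} yields the second term, $\mathcal O\bigl(lm_{\max}^2(\Lambda t_p)^2\sum_j|b_j|/m_j\bigr)$.

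\textbf{Main obstacle.} The delicate step is bookkeeping in the compilation of $\mathcal S_2$. When inverting the toggling frame, the consecutive factors $e^{-ig_k^\dagger\Phi_{k+1}^{(1)}g_k\beta}$ and $e^{-ig_k^\dagger H_0g_k\tau_{k+1}/(2m_j)}$ must regroup exactly into products of the form $P_ke^{-i\beta\Phi_k^{(1)}}$, free evolutions, and $e^{-i\beta\Phi_k^{(1)}}P_k^\dagger$, without any additional approximation. This holds as in Construction~2, since only the ordering of the toggling operators $g_k$ is used. One also needs the implicit assumption that $\beta_j\Lambda t_p$ is small enough for Lemma~\ref{lemma_pulse_impl} to apply, since the lemma's $\mathcal O$ constants are uniform only in that regime.
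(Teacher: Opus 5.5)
Your proposal is correct and matches the paper's proof step for step: the same bound $\Gamma T\le\Lambda\bigl(T_c+(2m_{\max}-1)lt_p\bigr)$ from Eq.~\eqref{exponent_algo2} with $c=2m_{\max}$, and the same choice $\beta=m_{\max}/m_j\ge1$, so that only the positive-$\beta$ relations \eqref{pulse_1}--\eqref{pulse_2} are needed. You then telescope over the $2lm_j$ pulse factors and insert the result into the master bound \eqref{eq:mpf_master_bound}, exactly as the paper does. Your remark that the toggling-frame regrouping into $P_ke^{-i\beta\Phi_k^{(1)}}$ and $e^{-i\beta\Phi_k^{(1)}}P_k^\dagger$ is exact is also right, and you separate the two roles slightly more cleanly than the paper's wording: positivity of $\alpha$ is what selects those two relations, while $\beta\ge1$ guarantees stretching rather than compression.
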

\begin{proof}
By Eq.~\eqref{exponent_algo2},
\begin{align}
    \Gamma T &\le \sum_{k=0}^{l-1} \left(2m_{\max}\|\Phi_{k+1}^{(1)}\|+\|H_0\|\tau_{k+1}\right) \\
    &\le \Lambda\bigl(T_c+(2m_{\max}-1)lt_p\bigr),
\end{align}
where we used $\|\Phi_k^{(1)}\|\le \Lambda t_p$ and $T_c=\sum_{k=1}^l \tau_k+lt_p$. Hence the ideal MPF contribution in Eq.~\eqref{eq:mpf_master_bound} is
\begin{align}
    \mathcal O\left(\bigl(\Lambda(T_c+(2m_{\max}-1)lt_p)\bigr)^{2p+1}\right).
\end{align}
For the implementation error, write $\alpha=1/m_j$ and $\beta=\alpha m_{\max}=m_{\max}/m_j$. Then $\beta\ge 1$ for all $j$, so only the positive-$\beta$ relations in Eqs.~\eqref{pulse_1} and \eqref{pulse_2} are needed. Each pulse factor in one copy of $\mathcal S_2(T/m_j)$ is therefore implemented with error $\mathcal O\bigl((\Lambda\beta t_p)^2\bigr) = \mathcal O\left(\left(\Lambda\frac{m_{\max}t_p}{m_j}\right)^2\right)$. Since $\mathcal S_2(T/m_j)$ contains $2l$ pulse factors and $U_j(T)=[\mathcal S_2(T/m_j)]^{m_j}$ repeats this block $m_j$ times, a telescoping argument gives
\begin{align}
    \|\widetilde U_j(T)-U_j(T)\| &= \mathcal O\left( lm_j\left(\Lambda\frac{m_{\max}t_p}{m_j}\right)^2
    \right) \\
    &= \mathcal O\left( lm_{\max}^2(\Lambda t_p)^2\frac{1}{m_j} \right).
\end{align}
Substituting this into Eq.~\eqref{eq:mpf_master_bound} proves the claim.
\end{proof}

\section{Details on numerical experiments} \label{app:numerics}

\subsection{DCGs for the Ising-model numerics}\label{app:numerics_ising_dcg}

We now specialize the Eulerian DCG construction of Appendix~\ref{app:dcg} to the Ising-model numerics discussed in Sec.~\ref{sec:numerics_Ising}. 

\subsubsection{First-order DCG}
For the $n=3$ Ising-model numerics, the required ideal controls are
\begin{align}
    W\in\{X_1X_2,X_2X_3,X_1,X_2\}.
\end{align}
We implement each such $W$ using the first-order Eulerian DCG construction from Appendix~\ref{app:dcg}. With $\widetilde I_W=\widetilde W^{\rm rev}\widetilde W$ from Eq.~\eqref{I_W}, the pair $\{\widetilde I_W,\widetilde W(2t_p)\}$ is the associated first-order balance pair.

\begin{figure}[t!]
    \centering
    \includegraphics[width=8.6cm]{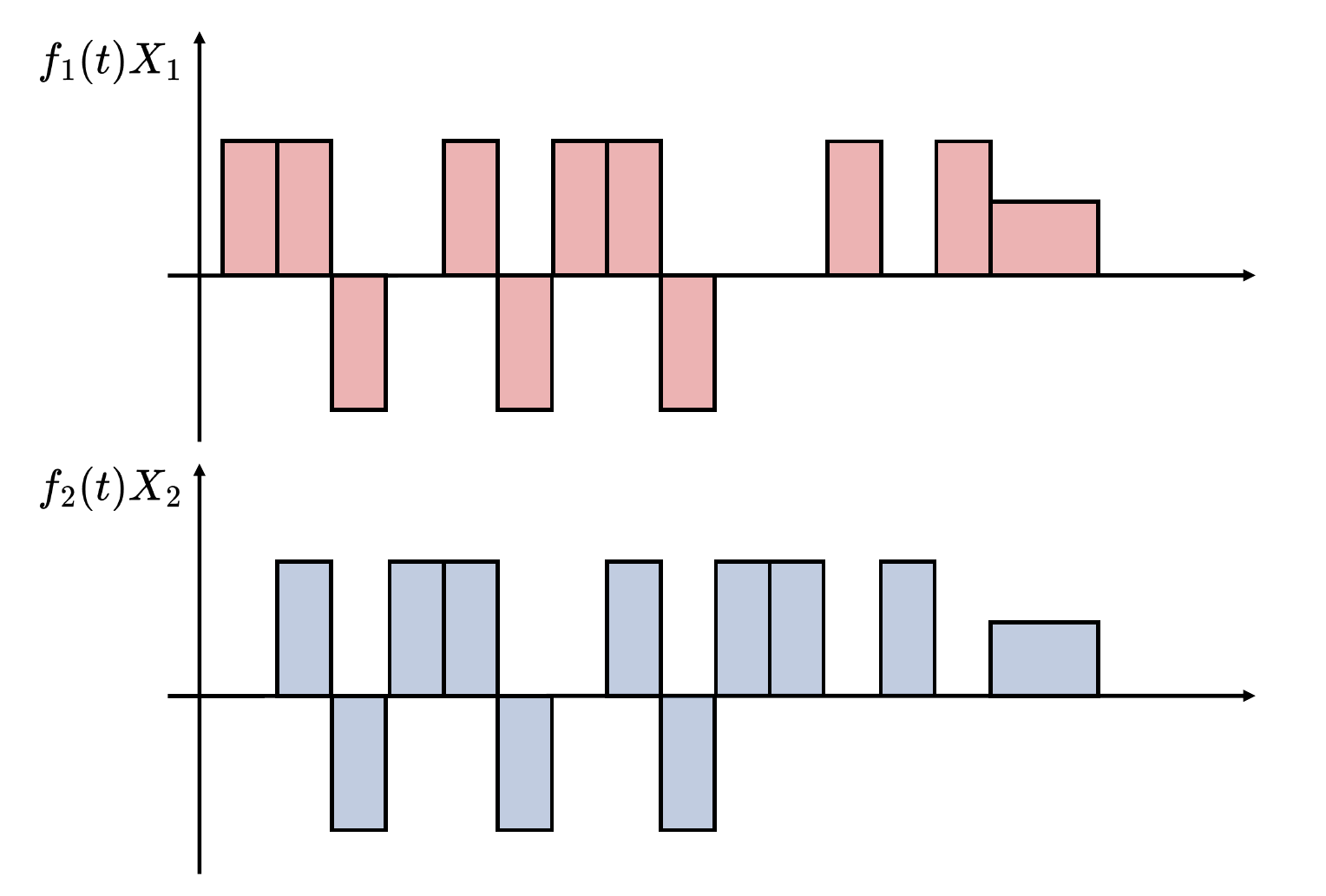}
    \caption{Rectangular-pulse implementation of the first-order DCG for $W=X_1X_2$. Narrow rectangles denote rectangular $\pi$ pulses of duration $t_p$ and amplitude $\pi/(2t_p)$. The upper and lower pulse pairs realize $\widetilde W$ or $\widetilde W^{\rm rev}$, and two consecutive such pairs realize $\widetilde I_W=\widetilde W^{\rm rev}\widetilde W$. The final wider pair represents the stretched pulse $\widetilde W(2t_p)$ with duration $2t_p$ and amplitude $\pi/(4t_p)$. The total control time is $16t_p$.}
    \label{fig:DCG_algo1_Ising}
\end{figure}

We choose
\begin{align}
    \Gamma_{\rm Ising}=\{X_1,X_2\},
    \quad
    \mathcal G_{\rm Ising}=\{I,X_1,X_2,X_1X_2\}.
\end{align}
Using the notation of Appendix~\ref{app:dcg}, let
\begin{align}
    \mathcal E_{\rm Ising} :=
    \mathrm{span}\bigl\{ \Phi_{X_1}^{(1)}, \Phi_{X_2}^{(1)}, \Phi_{X_1X_2}^{(1)}, \Phi_{X_2X_3}^{(1)} \bigr\}.
\end{align}
For $H_0 = J(Z_1Z_2+Z_1Z_3+Z_2Z_3)$, one finds
\begin{align}
\begin{aligned}
    \mathcal E_{\rm Ising}
    \subset
    \mathrm{span}\bigl\{
    &Z_1Z_2, Z_1Y_2, Y_1Z_2, Y_1Y_2, \\
    &Z_1Z_3, Y_1Z_3, Z_1Y_3,\\
    &Z_2Z_3, Y_2Z_3, Z_2Y_3, Y_2Y_3
    \bigr\}.
\end{aligned}
\end{align}
Each basis element contains a $Y$ or $Z$ on qubit 1 or 2, and hence changes sign under conjugation by $X_1$ or $X_2$. Therefore,
\begin{align}
    \frac{1}{|\mathcal G_{\rm Ising}|} \sum_{g\in\mathcal G_{\rm Ising}} g^\dagger E g =0, \qquad  \forall E\in\mathcal E_{\rm Ising},
\end{align}
so $\mathcal G_{\rm Ising}$ satisfies the decoupling condition in Eq.~\eqref{eq:dcg_decoupling_condition}.

Applying the augmented Cayley-graph construction with $\Gamma=\Gamma_{\rm Ising}$ and $\mathcal G=\mathcal G_{\rm Ising}$ therefore yields a first-order DCG for each $W$. Explicitly, 
\begin{align}\label{Ising_DCG_1st}
    \widetilde W^{[1]} = \widetilde W(2t_p) \widetilde X_1\widetilde X_2\widetilde X_1\widetilde X_2 \widetilde X_2\widetilde I_W\widetilde X_1\widetilde I_W\widetilde X_2\widetilde I_W\widetilde X_1.
\end{align}
Fig.~\ref{fig:DCG_algo1_Ising} illustrates the corresponding rectangular-pulse implementation for $W=X_1X_2$.

\subsubsection{Second-order DCG}
To obtain second-order DCGs, we enlarge the available local controls to include both $X$ and $Y$ rotations on the active qubits. For each target operation $W$, let $(a,b)$ denote the active qubit pair and let $c$ be the remaining spectator qubit. In the present numerics, we take $(a,b)=(1,2)$ for $W\in\{X_1X_2,X_1,X_2\}$ and $(a,b)=(2,3)$ for $W=X_2X_3$.

We choose
\begin{align}
    \Gamma_{ab}^{(2)}=\{X_a,Y_a,X_b,Y_b\},
\end{align}
and let $\mathcal G_{ab}^{(2)}$ be the corresponding two-qubit Pauli decoupling group on qubits $a$ and $b$. We define
\begin{align}
    \mathcal E_{ab}^{(2)} := \mathfrak{su}(4)_{ab}\otimes \mathrm{span}\{I_c,Z_c\}.
\end{align}
For $H_0= J(Z_1Z_2+Z_1Z_3+Z_2Z_3)$, the first-order error generators of all finite-width pulses appearing in the construction lie in $\mathcal E_{ab}^{(2)}$. Moreover, $[\mathcal E_{ab}^{(2)},\mathcal E_{ab}^{(2)}]\subseteq \mathcal E_{ab}^{(2)}$, so the residual error space is preserved under one level of concatenation.

Since $\mathcal G_{ab}^{(2)}$ is the full Pauli group on qubits $a$ and $b$, averaging over the group averages out every traceless operator on those qubits. Hence
\begin{align}
    \frac{1}{|\mathcal G_{ab}^{(2)}|} \sum_{g\in\mathcal G_{ab}^{(2)}} g^\dagger E g =  0, \qquad \forall E\in\mathcal E_{ab}^{(2)},
\end{align}
so $\mathcal G_{ab}^{(2)}$ satisfies the decoupling condition of Eq.~\eqref{eq:dcg_decoupling_condition} for both the first-order block and its first concatenation.

Let $\widetilde W^{[1]}$ denote the corresponding first-order Eulerian DCG using Eq.~\eqref{Ising_DCG_1st}, with total duration $\tau_1$. The associated balance pair from Eqs.~\eqref{cDCG_balance_pair_I}--\eqref{cDCG_balance_pair_W} is
\begin{align}
    \widetilde I_W^{[1]} & := \widetilde{W^\dagger}^{[1]}(\tau_1) \widetilde W^{[1]}(\sqrt{2}\tau_1), \\
    \widetilde W_*^{[1]} & := \widetilde W^{[1]}(\tau_1) \widetilde{W^\dagger}^{[1]}(\tau_1) \widetilde W^{[1]}(\tau_1).
\end{align}
Applying the augmented Cayley-graph construction again, now with generator blocks $\{\widetilde X_a^{[1]},\widetilde Y_a^{[1]},\widetilde X_b^{[1]},\widetilde Y_b^{[1]}\}$, self-loops labeled by $\widetilde I_W^{[1]}$, and an exit edge labeled by $\widetilde W_*^{[1]}$, therefore yields a second-order concatenated DCG $\widetilde W^{[2]}$ with $\|\widetilde W^{[2]}-W\|=\mathcal O((\Lambda t_p)^3)$. 

In the numerics, the corresponding pulse sequence is generated from an Eulerian path on the augmented Cayley graph of $\mathcal G_{ab}^{(2)}$ using Hierholzer's algorithm \cite{hierholzer1873ueber}.

\subsection{Details for the CR-model numerics}\label{app:numerics_cr}

We now specialize Construction~1 and the negative-time implementation of Sec.~\ref{sec:negative_time} to the cross-resonance numerics of Sec.~\ref{sec:numerics_CR}, focusing on the $n=4$ simulations in Fig.~\ref{fig:CR_result_main}. For convenience, write $Y_{1234}:=\sum_{i=1}^4 Y_i$, $Z_{1234}:=\sum_{i=1}^4 Z_i$, $Y_{24}:=Y_2+Y_4$, and $Z_{24}:=Z_2+Z_4$. Up to a global phase, the required rotations may be written as
\begin{align}
\begin{aligned}
    R_E^{\otimes 4}=e^{-i\pi Y_{1234}/4}e^{-i\pi Z_{1234}/4},\\
    H_{\rm even}=e^{-i\pi Y_{24}/4}e^{-i\pi Z_{24}/2}.
\end{aligned}
\end{align}

\subsubsection{First-order DCG}\label{app:numerics_cr_DCG}
We implement $R_E^{\otimes 4}$, $R_E^{\dagger\otimes 4}$, and $H_{\rm even}$ by first-order Eulerian DCGs. For the collective $Z$- and $Y$-rotations entering $R_E^{\otimes 4}$, define $W_{Z,1234}:=e^{-i\pi Z_{1234}/4}$, $W_{Y,1234}:=e^{-i\pi Y_{1234}/4}$, $Z_{12}:=Z_1Z_2$, and $Z_{23}:=Z_2Z_3$.

Applying the first-order DCG construction of Appendix~\ref{app:dcg} to $W_{Z,1234}$ with $\Gamma_Z=\{Z_{12},Z_{23}\}$ and $\mathcal G_Z=\{I,Z_{12},Z_{23},Z_{12}Z_{23}\}$, we introduce the error space
\begin{align}
    \mathcal E_{Z,1234} := \mathrm{span}\bigl\{ \Phi_{Z_{12}}^{(1)}, \Phi_{Z_{23}}^{(1)}, \Phi_{W_{Z,1234}}^{(1)} \bigr\}.
\end{align}
For the CR model $H_0=J(X_1Z_2+X_2Z_3+X_3Z_4)$, since conjugation by $Z_{12}$ or $Z_{23}$ mixes only $X_i$ and $Y_i$ on the acted-on sites, one finds
\begin{align}
\begin{aligned}
    \mathcal E_{Z,1234} \subset
    \mathrm{span}\bigl\{ X_1Z_2,  Y_1Z_2, X_2Z_3, Y_2Z_3, X_3Z_4,  Y_3Z_4 \bigr\}.
\end{aligned}
\end{align}
Each basis element changes sign under conjugation by at least one of $Z_{12}$ and $Z_{23}$, and hence
\begin{align}
    \frac{1}{|\mathcal G_Z|}
    \sum_{g\in\mathcal G_Z} g^\dagger E g =0,
    \qquad
    \forall E\in\mathcal E_{Z,1234}.
\end{align}
Thus $\mathcal G_Z$ satisfies the decoupling condition in Eq.~\eqref{eq:dcg_decoupling_condition} for $W_{Z,1234}$.

Similarly, for $W_{Y,1234}$ we take $\Gamma_Y=\{Y_2,Y_3\}$, $\mathcal G_Y=\{I,Y_2,Y_3,Y_2Y_3\}$, and $\mathcal E_{Y,1234}:=\mathrm{span}\{\Phi_{Y_2}^{(1)},\Phi_{Y_3}^{(1)},\Phi_{W_{Y,1234}}^{(1)}\}$. For the CR model, every $E\in\mathcal E_{Y,1234}$ contains an $X$ or $Z$ on qubit $2$ or $3$, and hence changes sign under conjugation by at least one of $Y_2$ and $Y_3$. Therefore $\mathcal G_Y$ also satisfies Eq.~\eqref{eq:dcg_decoupling_condition} for $W_{Y,1234}$.


\begin{figure}[t!]
    \centering
    \includegraphics[width=8.6cm]{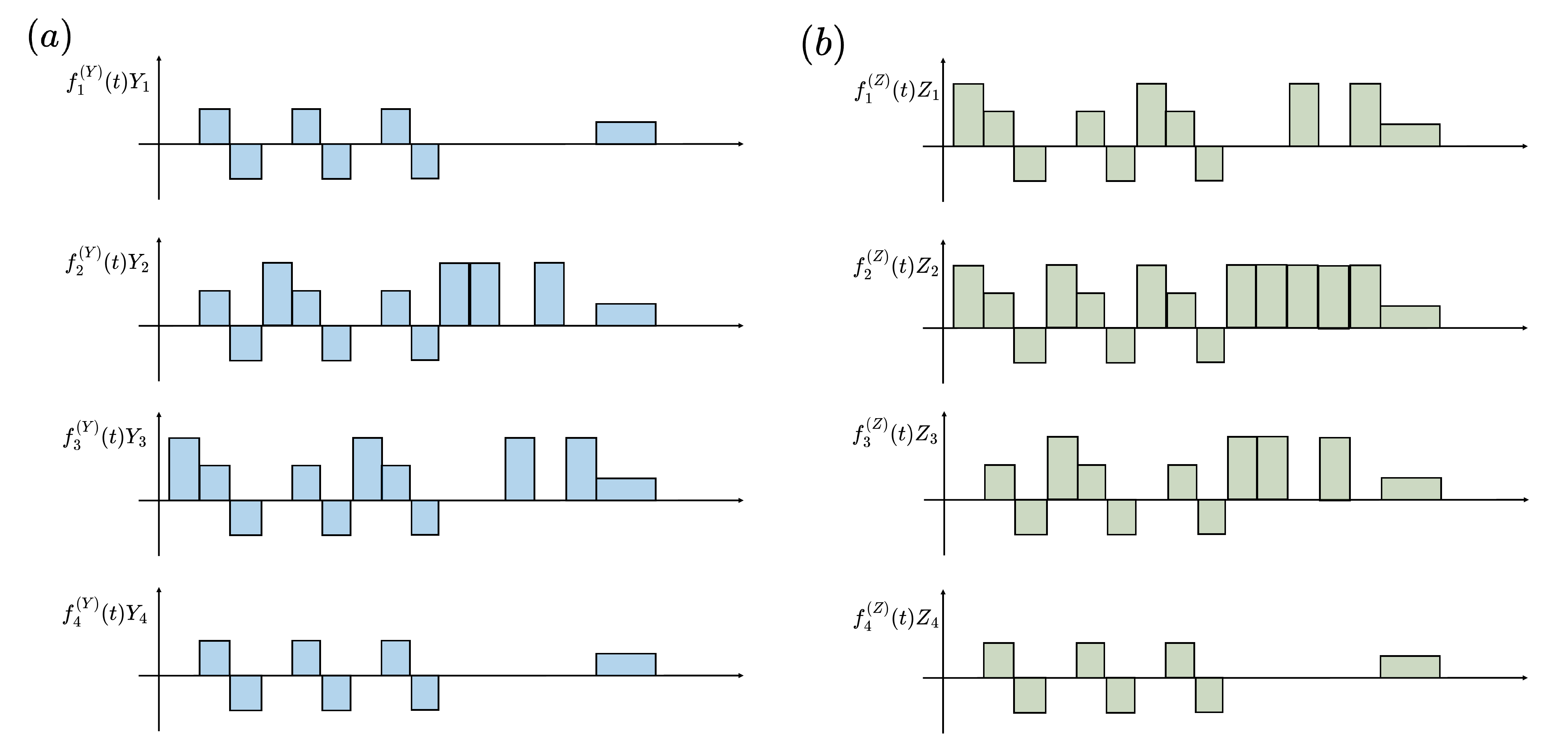}
    \caption{Rectangular-pulse implementations of $\widetilde W_{Y,1234}^{[1]}$ and $\widetilde W_{Z,1234}^{[1]}$ entering Eq.~\eqref{eq:cr_re_dcg}. Each sequence is obtained from Eq.~\eqref{eq:cr_first_order_dcg}. Composing the two blocks yields the robust implementation $\widetilde R_E^{[1]}$ of $R_E^{\otimes 4}$.}
    \label{fig:CR_DCG_pulse}
\end{figure}

Applying the augmented Cayley-graph construction to $(\Gamma_Z,\mathcal G_Z)$ and $(\Gamma_Y,\mathcal G_Y)$ yields first-order DCGs
\begin{align}\label{eq:cr_first_order_dcg}
    \widetilde W^{[1]} = \widetilde W(2t_p) \widetilde G_1\widetilde G_2\widetilde G_1\widetilde G_2 \widetilde G_2\widetilde I_W \widetilde G_1\widetilde I_W \widetilde G_2\widetilde I_W \widetilde G_1,
\end{align}
where $(G_1,G_2)=(Z_{12},Z_{23})$ for $W=W_{Z,1234}$ and
$(G_1,G_2)=(Y_2,Y_3)$ for $W=W_{Y,1234}$, and $\widetilde I_W=\widetilde W^{\rm rev}\widetilde W$ as in Eq.~\eqref{I_W}. We then implement
\begin{align}\label{eq:cr_re_dcg}
    \widetilde R_E^{[1]} := \widetilde W_{Y,1234}^{[1]}\widetilde W_{Z,1234}^{[1]},
    \quad
    \widetilde{R_E^\dagger}^{[1]} := \widetilde{W_{Z,1234}^\dagger}^{[1]}\widetilde{W_{Y,1234}^\dagger}^{[1]}.
\end{align}
Fig.~\ref{fig:CR_DCG_pulse} illustrates the corresponding rectangular-pulse implementations of the component DCGs $\widetilde W_{Y,1234}^{[1]}$ and $\widetilde W_{Z,1234}^{[1]}$ entering Eq.~\eqref{eq:cr_re_dcg}.

Analogously, we implement
\begin{align}
    \widetilde H_{\rm even}^{[1]}:=\widetilde W_{Y,24}^{[1]}\widetilde W_{Z,24}^{[1]},
\end{align}
where $H_{\rm even}=W_{Y,24}W_{Z,24}$, $W_{Y,24}:=e^{-i\pi Y_{24}/4}$, and $W_{Z,24}:=e^{-i\pi Z_{24}/2}$.

\subsubsection{Negative-time implementation}\label{app:numerics_cr_neg}

The fourth-order Trotter formula used in the CR numerics contains a negative-time free-evolution segment. We implement this segment using the $k=2$ instance of the symmetrized concatenated Eulerian DD construction from Appendix~\ref{subsec:appendix_neg_time_eulerian}.

At the first level, we take $\Gamma^{(1)}=\{Y_2,Y_3\}$. An Eulerian cycle on the corresponding Cayley graph is $\{Y_3,Y_2,Y_3,Y_2,Y_2,Y_3,Y_2,Y_3\}$, which defines the first-level identity block $U_{\rm sym,EDD}^{(1)}(\tau)$ through Eq.~\eqref{eq:sym_edd_base_app}.

\begin{figure}[t!]
    \centering
    \includegraphics[width=8.6cm]{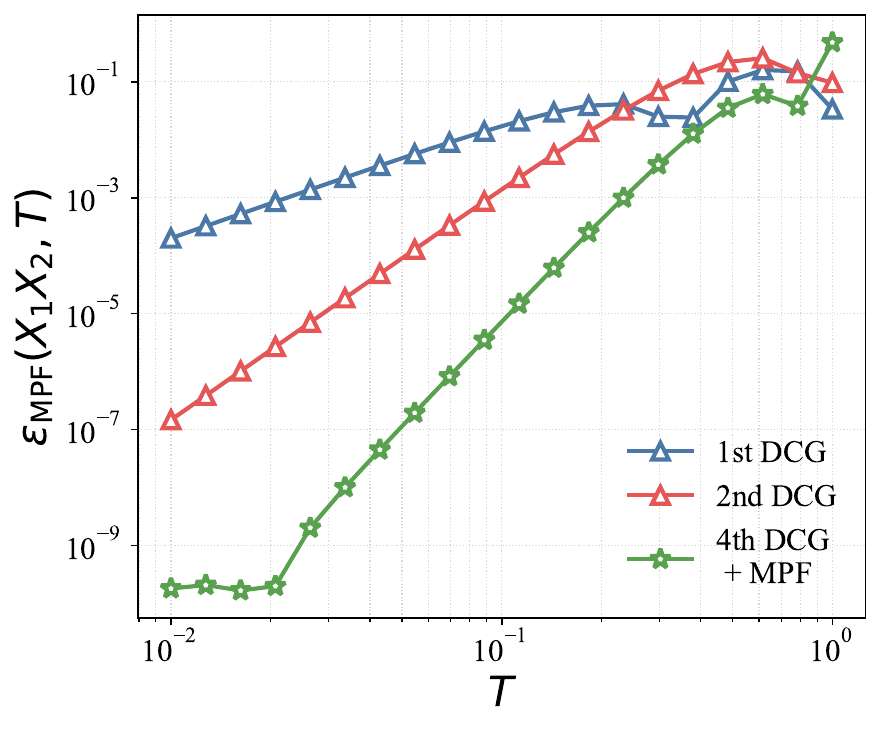}
    \caption{Hybrid multi-product-formula results. For $n=4$ and $t_p=10^{-5}$, we plot the observable-estimation error for $O=X_1X_2$ and $\rho_0=\ket{0101}\bra{0101}$. The MPF combination \eqref{MPF_M2} built only from robust second-order blocks, exhibits the expected fifth-order scaling in $T$.}
    \label{fig:Algo1_MPF}
\end{figure}

At the second level, the Eulerian cycle must decouple the residual error action of the first-level block. To ensure this, we enlarge the generating set to $\Gamma^{(2)}=\{X_i,Y_i\}_{i=1}^4$, which generates the four-qubit Pauli decoupling group. Using an Eulerian cycle on the corresponding Cayley graph, constructed with Hierholzer's algorithm, and substituting it into Eq.~\eqref{eq:sym_edd_recursive_app}, we obtain the second-level identity block $U_{\rm sym,EDD}^{(2)}(\tau)$. Writing
\begin{align}
    U_{\rm sym,EDD}^{(2)}(\tau)=\prod_{j=1}^{l_{\rm neg}} Q_j e^{-iH_0\tau},
\end{align}
where $Q_j$ denotes the $j$th pulse in this level-2 symmetrized EDD cycle, the negative-time propagator used in the numerics is defined, following Eq.~\eqref{eq:neg_dd_block}, by
\begin{align}\label{eq:cr_negative_time_block}
    U_{\rm neg}^{(2)}(\tau):=\left(\prod_{j=2}^{l_{\rm neg}} Q_j e^{-iH_0\tau}\right)Q_1, \qquad \tau>0.
\end{align}
Defining $u_2=(4-4^{1/3})^{-1}$ and $\tau_-:=(4u_2-1)T>0$, the negative free-evolution segment
\begin{align}
    e^{-iH_0(1-4u_2)T/2}=e^{+iH_0\tau_-/2}
\end{align}
appearing in the fourth-order block is implemented by $U_{\rm neg}^{(2)}(\tau_-/2)$.

\begin{figure}[t!]
    \centering
    \includegraphics[width=0.95\linewidth]{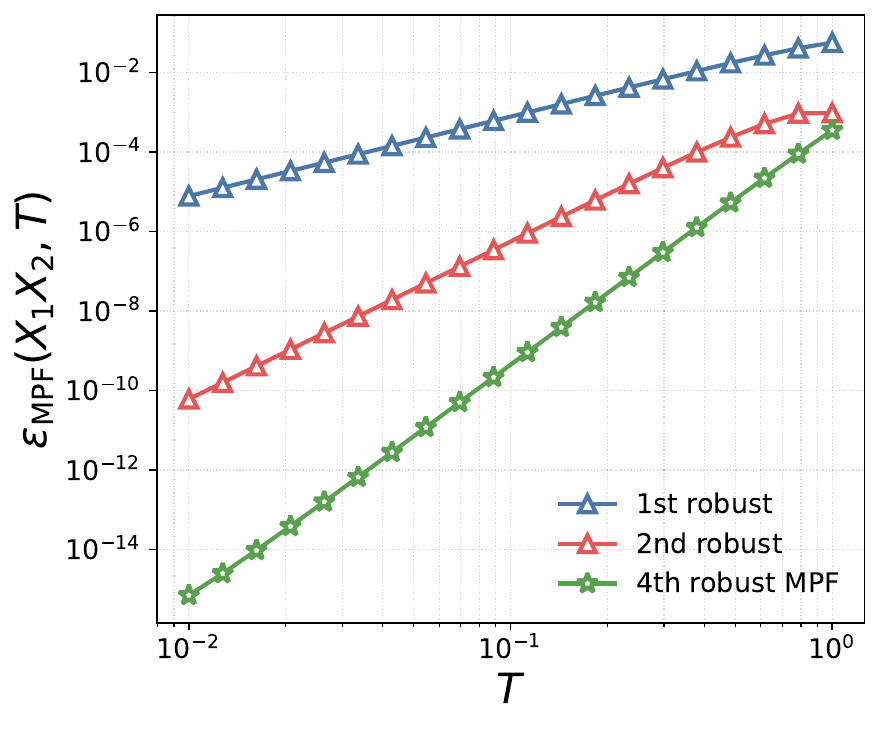}
    \caption{Hybrid MPF error for the anisotropic Heisenberg example. We use the same $p=2$ MPF combination as in Eq.~\eqref{MPF_M2}, with the symmetrized second-order sequence $\mathcal S_2(T)$ for the present model. The plot shows the expected fifth-order scaling in $T$ while using only positive-time robust second-order sequences.}
    \label{fig:heisenberg_mpf}
\end{figure}

\subsubsection{Hybrid multi-product-formula numerics}\label{app:numerics_cr_mpf}

We briefly benchmark the hybrid multi-product formula of Sec.~\ref{subsec:mpf}. We use the $p=2$ formula \cite{childs2012hamiltonian}
\begin{align}\label{MPF_M2}
    M^{(2)}(T) =  \frac{4}{3}\big[\mathcal S_2(T/2)\big]^2 -\frac{1}{3}\mathcal S_2(T).
\end{align}
For the numerics, we take the initial state $\rho_0=\ket{0101}\bra{0101}$ and the observable $O=X_1X_2$. With the notation of Appendix~\ref{app:mpf_bounds}, we evaluate $\varepsilon_\textrm{MPF}(O,T):= |\Tr(O\rho(T)) - \frac{4}{3}\Tr\big(O\widetilde{\rho}_2(T)\big)+\frac{1}{3}\Tr(O\widetilde{\rho}_1(T))|$ where $\widetilde{\rho}_1(T) = \widetilde S_2(T) \rho_0 \widetilde{S}_2^\dagger(T)$ and $\widetilde \rho_2(T) = \big[\widetilde{\mathcal S}_2(T/2)\big]^2 \rho_0 \big[\widetilde{\mathcal S}_2^\dagger(T/2)\big]^2$. Fig.~\ref{fig:Algo1_MPF} shows that the MPF combination, while using only positive-time robust second-order sequences, achieves the expected fifth-order scaling in $T$. 

\subsection{Hybrid MPF for the anisotropic Heisenberg model}
\label{app:numerics_heisenberg_mpf}

We benchmark the hybrid MPF for the anisotropic Heisenberg example. We use the same $p=2$ formula as in Eq.~\eqref{MPF_M2}, together with the same initial state, observable, and error metric $\varepsilon_{\mathrm{MPF}}(O,T)$ as in Appendix~\ref{app:numerics_cr_mpf}. Here each term is built from the symmetrized second-order sequence for the present example evaluated at time $T/m_j$, with the corresponding pulse stretching prescribed by the hybrid-MPF construction. For $p=2$, this means using $m_j\in\{1,2\}$, with no stretching for the $[\mathcal S_2(T/2)]^2$ branch and a stretch factor of $2$ for the $\mathcal S_2(T)$ branch.

\bibliography{ref_upd}

\end{document}